\newcommand{\nvper}{\textit{JASS}\xspace}
\newtheorem{theorem}{Theorem}[subsection]
\newtheorem{lemma}[theorem]{Lemma}
\begin{document}

\title{\textit{JASS}: A Flexible Checkpointing System for NVM-based Systems}

\author{
  Singh, Akshin\\
  \texttt{akshin.singh@cse.iitd.ac.in}
  \and
  Sarangi, Smruti R.\\
  \texttt{srsarangi@cse.iitd.ac.in}
}

\date{}
\maketitle

\thispagestyle{empty}

\begin{abstract}
	NVM-based systems are naturally fit candidates for incorporating periodic checkpointing
	(or snapshotting). This increases the reliability of the system, makes it more immune
	to power failures, and reduces wasted work in especially an HPC setup. The traditional line
	of thinking is to design a system that is conceptually similar to transactional memory,
	where we log updates all the time, and minimize the wasted work or alternatively the MTTR
	(mean time to recovery). Such ``instant recovery'' systems allow the system to recover
	from a point that is quite close to the point of failure. The penalty that we pay is 
	the prohibitive number of additional writes to the NVM.
	
	We propose a paradigmatically different approach in this paper, where we argue that in most
	practical settings such as regular HPC workloads or neural network training, 
	there is no need for such instant recovery. This means that we can afford
	to lose some work, take periodic software-initiated checkpoints and still meet the goals of the application. The key benefit of our scheme is that we reduce write amplification substantially; this extends the life of NVMs by roughly the same factor. We go a step further and design an adaptive system that can minimize the WA given a target checkpoint latency, and show that our control algorithm almost always performs near-optimally. Our scheme reduces the WA by 2.3-96\% as compared to the nearest competing work.
\end{abstract}

\section{Introduction}
NVMs are becoming commonplace in large data centers~\cite{pmdatacenter1,pmdatacenter2}. They have natural advantages as storage media as compared to hard disks, which include faster speed, higher throughput, and fast random access times. They can also be used in a different avatar: many NVM technologies such as Intel Optane~\cite{optane,optanechar} can masquerade as DRAM and increase the amount of usable memory. As compared to DRAM, the main advantage of NVMs is {\em persistence}, which means that data once written does not get destroyed even if there is a power failure. It is this feature that has been used extensively in the literature to design checkpointing  or snapshotting systems that allow an application or the full system to recover from a known point after failure. This is very important in HPC setups where jobs take a long time and any failure can potentially make one lose months of work. Hence, there is a rich body of work that uses NVMs for checkpointing applications and systems. {\bf Note} that we shall use the terms {\em checkpoint(ing)} and {\em snapshot(ting)} interchangeably in this paper.

The underlying conceptual thread in highly cited prior work in this area \cite{TSOPER,NVOverlay,PiCL,THYNVM,BSP,DHTM,WSP,LAD,REDU,WrAP,HOOP,PROTEUS} is highly cited from similar work that was done in transactional memory (TM). Even though not stated explicitly, the idea is that the systems are executing important computations all the time, and as a result, continuous logging and {\em instant recovery} are required. A system is said to be instantly recoverable if the point of re-execution (after a failure) is very close to its point of failure. Such assumptions are undoubtedly necessary for applications such as the stock market or e-commerce platforms, where there is a legal requirement to log everything and we have millisecond-level transactions~\cite{HFTtrade1}. However, there are also a very large number of applications that do not have such requirements. Consider, the case of NN training, weather simulation, molecular simulation, and HPC jobs of a similar nature.  Let us say that after failure recovery, we lose 2 seconds of work; users will fail to notice it. An analysis of the Google cloud trace by Ahmed et al.~\cite{ahmed2020characterizing} indicates that for most servers the MTTF (mean time to failure) is roughly 10 days and the MTTR (mean time to recovery) is high -- roughly 5:44 hours. Other studies have indicated a failure rate of 215 FITs(1 FIT = 1 failure/billion hours)~\cite{FITPAPER,CONCRETEFIT}. In such a scenario, saving a few milliseconds of checkpointing overhead is not of any consequence. Far bigger practical concerns dominate such as extending the life of the system, and ensuring a lower cost of ownership.  

In this context, the traditional continuous checkpointing model \cite{PiCL,THYNVM,NVOverlay} is not suitable because it increases the write amplification (additional writes to the NVM to make the system instantly recoverable) substantially, which was not a concern when we were writing to the caches and DRAM in the TM era. Caches and main memory were assumed to have infinite endurance (for all practical purposes). For NVM-based systems, write amplification is very important. Given that NVM devices have a limited shelf life, they should only service genuine writes generated by the application. We observe in this paper that the price of instant recovery is roughly a 3$\times$ increase in the write amplification.

We are thus proposing a paradigmatic shift to the NVM checkpointing problem. Our vision is more in line with the fundamental requirements of HPC and ML applications, where everything need not be logged and we have idempotency (no harm in re-executing the same computations). We instead argue that checkpointing can be done periodically using a timer interrupt or a software signal, or for a distributed application it can be initiated by an external node. It should be an infrequent operation: have a high epoch size ($ES$). This automatically reduces the $WA$ and further leads to a tradeoff between the checkpoint latency ($CL$) and the write amplification ($WA$). The {\em checkpoint latency} is defined as the time it takes to complete the checkpoint after initiating the process. The $CL$ and $WA$ are inversely related: one increases as the other decreases. We would like to set a bound on $CL$ because it determines the responsiveness of the system. Many a time, we would like to take a checkpoint before an I/O operation or system call;  there may be a requirement that the system needs to be quiescent till the checkpointing finishes. The problem that we thus solve is {\bf given a $CL$ and $ES$, create an architecture to minimize the $WA$}.

Our proposal \nvper embodies the basic realization that the conflict between $CL$ and $WA$ is like an insurance policy. The higher is the insured amount, the higher is the premium, which one must pay continuously throughout the lifetime of the policy. It is up to the policyholder to figure out whether the expenditure on the premium is justified given the insured amount. The same is the case for us, where the $WA$ is something that we incur throughout the life of the application. We shall see in Section~\ref{sec:results} that to achieve a reduction of even 1 ms in the $CL$, we will have to pay a big price in terms of the $WA$ (2 $\times$) because we always need to be ready to finish the checkpoint within the bound. Our first contribution is a system that allows us to operate anywhere in the region of feasible $\langle CL, WA \rangle$ for a workload and given value of $ES$ -- given a $CL$ value, we make a best effort to meet it and also minimize the $WA$. 

Many more innovations are possible because of this design philosophy. Existing NVM-based checkpointing schemes~\cite{THYNVM,NVOverlay,PiCL} checkpoint at the granularity of epochs (like us), however, while doing so, they introduce a very complex system that requires us to tag cache and sometimes DRAM data with multiple bits, persist when there is a coherence write or eviction, maintain the state of many epochs at the same time and have elaborate data coalescing algorithms. We design a scheme that is far simpler, where the crux of our algorithm is a coherent cache flushing scheme using a variant of the classical Chandy-Lamport algorithm~\cite{nancy} for distributed systems. We need to maintain only a single bit per cache line, require no changes to be made to the DRAM; we propose a simple closed-loop control algorithm to ensure that the $CL$ target is met while minimizing the $WA$.
We compare \nvper with the nearest competing work, NVOverlay \cite{NVOverlay}, and show that for its constraints, our $WA$ is 3$\times$ lower. Given a $CL$ target, we never overshoot (or undershoot) by more than 5\%.

\textbf{Main ideas in our proposal: \nvper}
\begin{itemize}
\item A novel cache flushing scheme that flushes all in-flight pre-snapshot messages from the NoC and pre-snapshot data from the
caches using a variant of the Chandy-Lamport algorithm.
\item A locality predictor that decides whether to keep a page in DRAM or not based on some tunable hyperparameters.
\item A method to dynamically tune those hyperparameters  such that the target CL is achieved while minimizing the WA.
\end{itemize}

Section~\ref{sec:back} describes the relevant background, Section~\ref{sec:motiv} presents the motivation, Sections
\ref{sec:overall-design} and \ref{sec:detailed-design} elaborate on the design; we present our results in 
Section~\ref{sec:results}, described related work in Section~\ref{sec:rel} and finally conclude in Section~\ref{sec:conc}.

%\begin{enumerate}
%\item NVOverlay presumes tags in main memory (not realistic)
%\item Too many versions. We don't need that many. Managing them is complex.
%\item Data coalescing not shown, this is an issue.
%\item Write amplification is too much (cannot be tailored based on the requirements).
%\item No global synchronization between epochs on different cores. 
%\end{enumerate}

%Our main contributions are:

%\begin{enumerate}
%\item A globally synchronized checkpoint without stalling the system.
%\item Single bit added to each cacheline's state to track lines.
%\item No monitoring coherence traffic.
%\item Switchable mode of operation that lets users make a tradeoff between write-amplification and latency. 
%\item On-demand snapshots.  
%\end{enumerate}

%FIXED: All references to ES, CL, and WA are in $$

\section{Background}
\label{sec:back}

The nonvolatile nature of byte-addressable Nonvolatile Memory (NVMs) has inspired different approaches to use it. Most software approaches perform some form of logging or transactions, while most hardware approaches define and implement the order of stores to memory.
In order to understand these approaches, we must look at the characteristics of NVMs first. 

\subsection{NVM Cell Characteristics}

All popular types of NVM cells (ReRAM, FeRAM, STT-RAM, PCM, NAND flash) have a smaller write endurance (roughly 1000X lower) when compared to DRAM~\cite{hardwaresurvey}. When the number of writes to a cell exceeds the endurance threshold, the cell loses its ability to retain data without applied power ({\em persistence}). Since the major benefits of using an NVM device is {\em persistence}, repeated writes to same cells are extremely problematic. As a result, device side \textbf{wear-levelling} is implemented to ensure that every cell sees roughly the same number of writes.

\subsection{Wasted Writes}
Although NVM devices placed on the memory bus are byte-addressable, they have an internal hardware buffer (for reading and writing) that caches data at the granularity of blocks~\cite{optanechar}. Writes to this block are coalesced by this buffer. A block switch causes a write-back of a modified block to a cell at a different location. This relocation of blocks is part of a device-side wear-leveling technique. We formally define {\em write amplification} as a
ratio: number of block-level writes made by the target system divided by the number of byte-level writes {\bf intended} by the target system. 

\subsection{Place in the Memory Hierarchy}

Since NVMs are a new memory class, their position in the hierarchy is critical. Figure \ref{fig:placement} shows the different options. Most research papers place DRAM and NVM at the same level~\cite{survey}, i.e., horizontal integration. This provides greater flexibility, as applications can divide their address space into persistent and non-persistent domains. The resulting problem of \textit{data placement} is handled by software \cite{espresso,autopersist}.
A system can opt for {\em replacement integration}, replacing DRAM with NVM. Although cost-effective, such a system suffers from long read and write delays to main memory. In this paper, we opt to design a system with the more popular horizontal integration. 

\begin{figure}[htbp]
     \centering     
     \includegraphics[scale=0.35]{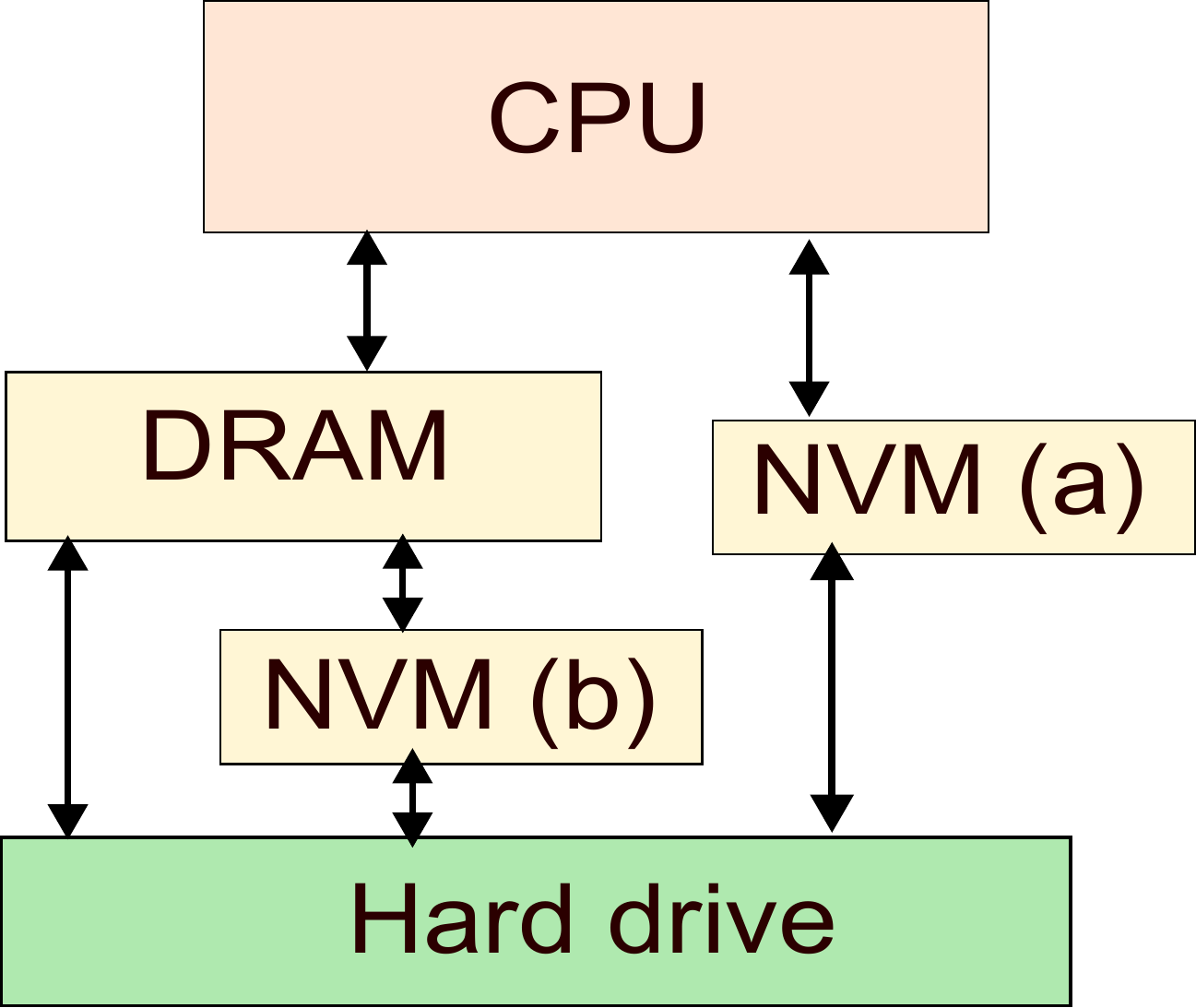}
     \caption{Placement of NVM in the memory hierarchy. (a) Horizontal integration puts NVM and DRAM on the same level. (b) Vertical Integration places NVM below DRAM.}
     \label{fig:placement}
\end{figure}

\subsection{Persistency Models and Architectures}
In 2014, Pelly et al. \cite{Pelly} defined three different persistency models: strict (writes follow a well defined persist order), epoch (persistence is guaranteed only at epoch boundaries) and strand persistency (threads can be broken into strands and two strands are independent). These models are helpful to software that accesses an NVM device directly since these models provide certain guarantees on the allowed order of persists (that are visible post-crash), conceptually similar to the visibility of writes by other threads in a memory consistency model~\cite{advarch}.  

\nvper is a full-system \textit{user-transparent} snapshotting system that persists at epoch boundaries, just like its predecessors \cite{TSOPER,ATOM,NVOverlay,PiCL,THYNVM}. The system can be seamlessly started from its last epoch boundary. The computations done after the last epoch boundary need to be redone, here idempotency is assumed. Most HPC workloads have this property.

The key idea in any epoch-based checkpointing system is to tag all the writes that belong to a periodically-increasing epoch with an epoch id. This epoch id is stored in cache lines, and in the case of NVOverlay~\cite{NVOverlay}, in the DRAM rows as well (in place of the ECC bits), which is very impractical in our view. This means that DRAM errors cannot be corrected and this also violates the semantics of DRAM usage, which has resisted change for decades. Now, when a read from a remote core accesses data, and the epoch id of the block is more than the epoch id of the remote core, the remote core gets promoted to the new epoch id. Data is typically persisted to the NVM when either there is a coherence write or when data from older epochs is evicted from the memory hierarchy. The issue is that we need to maintain a lot of epoch versions at the same time and before writing the data for an epoch, there may be a need to coalesce the data with the modifications made in all previous epochs. All of these issues lead to a large number of simultaneously-alive epochs and complex state management. Our aim is to simplify this and just maintain two epochs at a time. We rely on the classical Chandy Lamport algorithm~\cite{nancy}.

\subsection{Chandy-Lamport Snapshot}
\nvper collects a snapshot of a running system using a (modified) Chandy-Lamport distributed snapshotting algorithm as shown in Algorithm~\ref{clalgo}. The Chandy-Lamport snapshotting algorithm is initiated over the NoC by any node. It could also be initiated upon a message received from a remote machine (in the case of distributed coordinated checkpointing). Links in the network are assumed to be FIFO, which is the case for an NoC.  

\begin{algorithm}
  \footnotesize
  \caption{Chandy-Lamport algorithm}
  \label{clalgo}

    \begin {algorithmic}[1]
    \If{Snapshot token received for the first time}
    \State Take local snapshot.    
    \State Propagate token to all neighbors.
    \State Send an acknowledgement to the sender. 
    \ElsIf{Snapshot token received more than once}
    \State Send an acknowledgement to the sender.
    \ElsIf{Token received and message arrives on a non-token/non-ack receiving link}
    \State Record message.
    \EndIf
  \end {algorithmic}
  
\end{algorithm}

There are two kinds of messages: pre-snapshot (initiated before the snapshot token was received by the sender) and post-snapshot
(initiated after the sender received the snapshot token). 
The main idea is to propagate token messages on every link. If a node has received a token, it acknowledges the 
receiving of the token. Upon the first arrival of a token at a node, a local snapshot is taken at that node. 
This snapshot records the local state. 
Since the arrival of a token signifies that the sender has taken a snapshot, any subsequent messages on that link
are post-snapshot messages and are not recorded.

\section{Motivation}
\label{sec:motiv}

In this section, we motivate our design decisions. We start by looking at the issues that bedevil a software approach to checkpointing. We then look at two different components of a checkpointing system not discussed in previous works: latency and correctness. We conclude this section by arguing in favor of bigger epochs. 

\subsection{Issues with Software-based Checkpointing}
Software-based checkpointing solutions are created in one of two ways. One option is to change the algorithm of programs and modify them to be fault tolerant, by manually taking checkpoints (could be compiler-driven as well). This will ensure forward progress but it has high overheads~\cite{softwareoverheads1}. Moreover, it inflicts a performance penalty on the system since software-implemented \textit{persistence} needs to be realized.   
The other option is to take a snapshot without modifying the original program. This is achieved by pausing the execution, collecting a copy of the registers and memory, and finally persisting this data on to stable storage. The degradation in performance with this approach is attributed to the interruption of running programs. Now, since failures are expected to be rare, we would ideally like all checkpointing activity to be off the critical path. 

Sadly, many a time before a checkpoint completes, we cannot start some activity like a system call, hence having hardware
support to reduce checkpoint latency is crucial because the checkpointing operation itself may inevitably be on the critical path.

\subsection{Latency of a Checkpoint}

\begin{figure}[htbp]
     \centering     
     \includegraphics[width = 0.8\columnwidth]{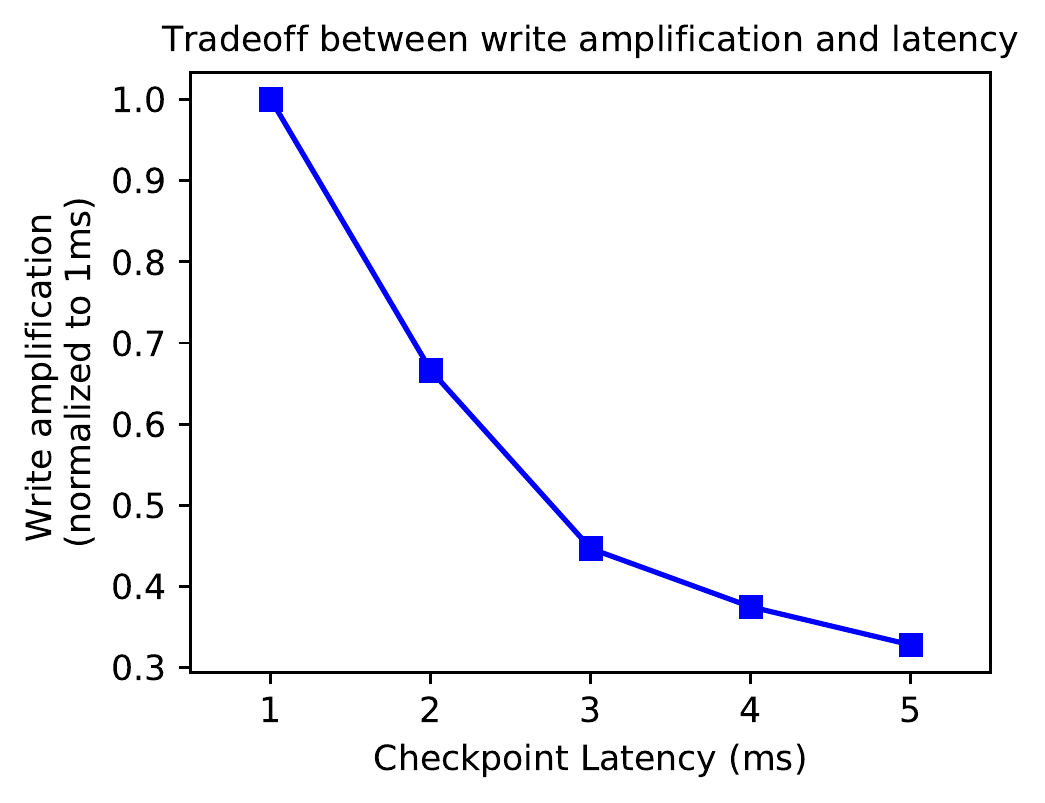}     
     \caption{Tradeoff between write amplification and latency (representative figure for the {\em facesim} benchmark)}
     \label{fig:tradeoff}
\end{figure}

We characterize that the latency of a checkpoint, i.e., the time taken by a checkpoint to complete after it is initiated, as one of the most crucial characteristics of the a system. Internal and external events that depend on the availability of a checkpointed system, will perceive the responsiveness of a system to be a function of the checkpoint latency ($CL$). We aim to design a system that meets a target latency provided by the user and then try to minimize write amplification ($WA$). Figure \ref{fig:tradeoff} illustrates the tradeoff achieved between $CL$ and $WA$ for the {\em facesim} benchmark (more details in Section~\ref{sec:results}) with \nvper. As expected, with a lower $CL$, the system observes fewer writes to the NVM.  Note how diminishing returns quickly set in for higher values of $CL$. It is up to the user to provide a $CL$ target, which is
a function of the responsiveness of the system and the epoch size ($ES$).

\subsection{Issues with Existing Approaches}
Existing approaches maintain the state of many epochs ({\em versions}) at a time, which complicates the caches. NVOverlay
goes one step further and proposes to re-purpose the ECC bits in DRAM rows to store version information, which would by and large
not be acceptable to the DRAM industry. Second, to ensure a small $ES$, writes are frequently sent to the NVM leading to higher write
amplification and given that many versions need to be reconciled across the cache hierarchy, data coalescing becomes quite complicated.
There are more subtle issues: some works~\cite{PMEMSPEC} assume
separate persist (NVM) and store (DRAM) paths in the system, which is not practical. We use the same path, i.e., persist and store via the NoC
(similar to~\cite{WSP,BSP}).  We observed that correctly flushing the caches and in-flight messages in the NoC is a non-trivial
problem, which has not been adequately dealt with in prior work~\cite{NVOverlay,THYNVM,PiCL}. We propose a rigorous theoretical
mechanism that can efficiently do so.

\section{\nvper : Overall Design}
\label{sec:overall-design}

\nvper is a full-system snapshotting system. The overview of the system is shown in Figure~\ref{overall}. Given that we use long epochs, there is no need to maintain the state of multiple epochs at a time. We maintain only 1-bit state using a single bit: pre-snapshot and post-snapshot. At the end of the next epoch, the sense {\em reverses} (post-snapshot in the previous epoch becomes pre-snapshot in the next epoch). Every cache line is tagged with this bit; however, main memory (DRAM) is left unmodified.
Both epochs have separate DRAM page (simply called a page henceforth) tables that refer to the modified working sets of the epochs.
A DRAM scrubbing scheme persists the modified pages for the epoch that is being persisted. {\bf Note} that a {\em page} is defined as
a 256-byte region of contiguous memory (not the 4 KB virtual memory page, which we refer to as a VM-page.)

\begin{figure}[!htb]
  \centering
     \includegraphics[width=0.45\textwidth]{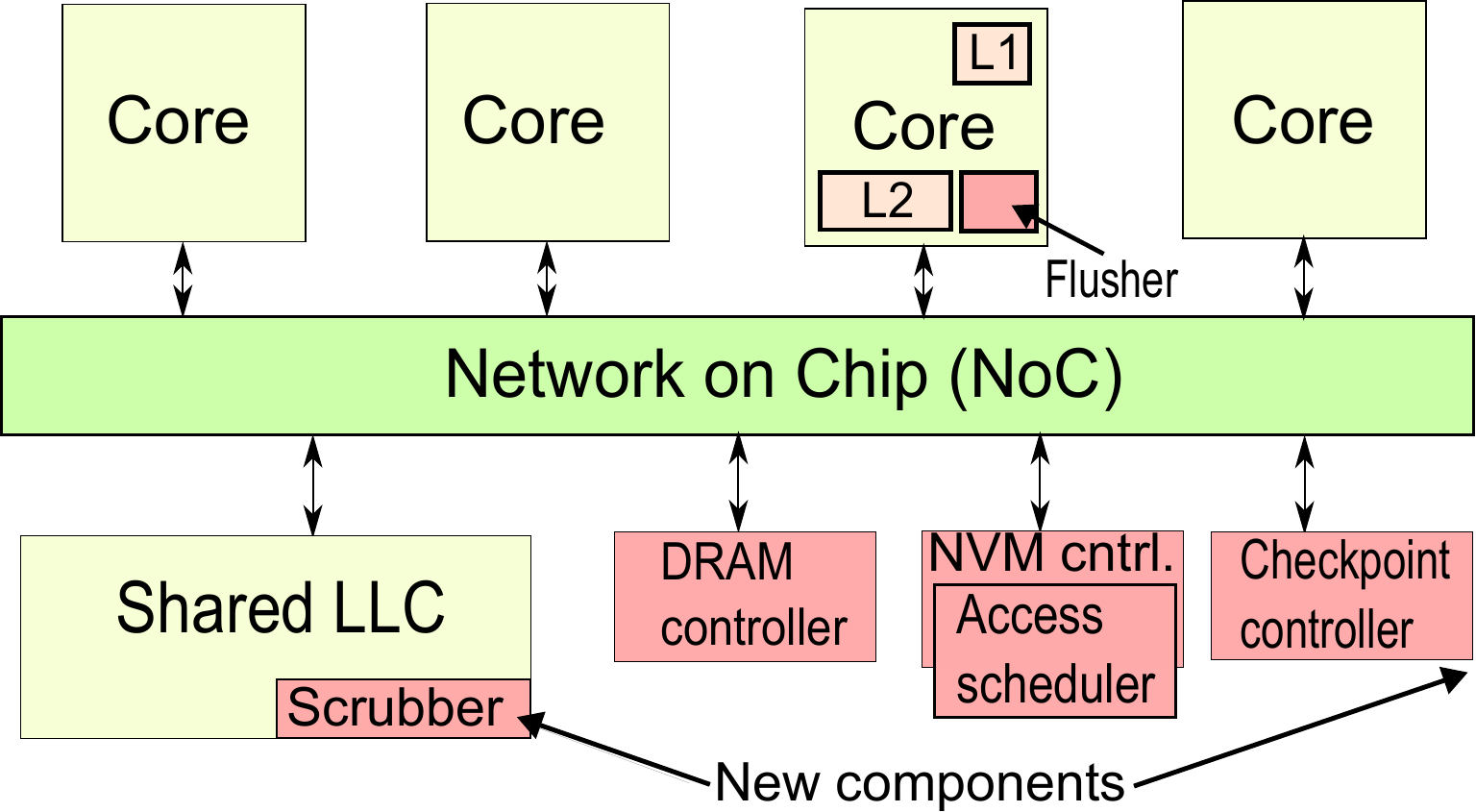}
   \caption{Design of the overall system. The red (darkly shaded) regions are the new components in \nvper.}
  \label{overall}
\end{figure}

\subsection{Regular Operation}
While not taking a checkpoint the operation of the system carries on as usual. Given an $\langle ES, CL \rangle$ 
target, we are not allowed to exceed the $CL$ bound and we need to do our best to minimize the write amplification, $WA$. 
Hence, there is always a bound on how much of data we can keep in a volatile state (in the caches and memory). We
thus create a temporal locality predictor that continually moves data that is not expected to be accessed in the near future
to the NVM such that when there is a request for a checkpoint, we need to do a bounded amount of work. This does 
increase $WA$ but if the locality predictor is effective, the increase is minimal.

\subsection{Checkpoint Initiation}
\nvper has two ways to initiate a checkpoint. A clock-driven method takes a periodic checkpoint of the system. If the system decides that, in the rare event of a failure, it is okay with losing a maximum of $\eta$ minutes of work, we set the period between checkpoints to $\eta$ minutes. This is the
epoch size, $ES$. The system sets this value, and we expose this using a privileged instruction.
This mode can be turned off if periodic checkpoints are not desirable.
The other method is to initiate a checkpoint before an {\em event of interest} like a system call or I/O instruction. A machine
can also initiate a checkpoint in a large distributed system when it gets a message from a remote node.
 
\subsection{Process of Taking a Checkpoint}
We can divide the process of taking a checkpoint into three distinct phases: \textcircled{1} Flush the pre-snapshot messages in the NoC and 
the lines in the high-level caches to the LLC, \textcircled{2} scrub the caches and write the changes to the NVM, \textcircled{3} scrub the main
memory and move all the modified pages (since the last epoch boundary) to the NVM. 

\subsection{Incremental Checkpoints and Recovery}
\nvper takes incremental checkpoints. The NVM device has a page table in a known location that is 
used during recovery. We use atomic logging (similar to~\cite{NVOverlay}) to ensure recovery of the page table data structure. 
Post-crash recovery is simple but time-consuming. First, we read the page table and move pages to DRAM one by one. 
When this is done, we restore the register state in all the cores. The system can then resume.

\section{\nvper : Detailed Design}
\label{sec:detailed-design}

As described in the previous section, a checkpoint can be initiated for a variety of reasons:
period timer interrupt, before a system call or I/O operation, or as a part of a coordinated activity
in a distributed system. Regardless of the reason, the flow of actions is the same. A checkpoint controller
starts the process and initiates the Chandy Lamport algorithm by sending a checkpoint message (snapshot token) to its
neighboring nodes on the NoC.

\subsection{The Process of Snapshotting}

Upon receiving a snapshot token at a router on the NoC, the router checks if it has received one before. If it has not received a token, the connected elements are notified to take a snapshot, and the router propagates the token to all neighbours. 
Every core has a single bit initialized to zero. This bit is attached to every message and cache line in the system. Let us call this bit the \textit{snapshot} bit. Consider the case when we are taking the first snapshot.

If the snapshot bit is $0$, we know that the message/cache line associated with the bit is pre-snapshot. Once an element is made aware of an ongoing snapshot, all further message generation and cache modifications happen by appending a $1$ bit instead of a $0$ bit, thereby distinguishing between pre-snapshot and post-snapshot data/messages.

After snapshot completion, we can change the sense of the bits: now $1$ stands for a pre-snapshot state and $0$ stands
for a post-snapshot state.

\subsubsection{Snapshotting Cores and Private Caches}
Snapshotting a core is simple. We first flush the pipeline and then store the architectural register state in a known location in the NVM. The core is thus assumed to be checkpointed. All subsequent memory writes are deemed to be post-snapshot writes. 
Snapshotting a cache also entails changing the state of future evictions to post-snapshot writes. We assume that both the cores
and caches contain a snapshot bit, whose sense changes every epoch.

\subsection{Flushing Messages on the NoC}
Correctly flushing pre-snapshot messages from the NoC is complicated and even though it looks simple, it is not so. 
This is something that has not gotten its due share of importance in prior work. Many a time architects assume that a solution is simple and something at design time can be worked out; however, this is one such instance where this assumption is not true. There are a lot of hidden subtleties and doing this correctly is quite difficult.
It is possible that a pre-snapshot message gets stuck for a long time in a router. We must make sure 
that this message reaches its destination before we start scrubbing data in the lower-level caches. To achieve this, we must perform some form of {\em termination detection} that makes sure that there are no messages in the NoC with a pre-snapshot tag.

\begin{algorithm}
  \footnotesize
  \caption{Flushing Algorithm}
  \label{flushalgo}
  
  \begin {algorithmic}[1]
    \State Initially, $\forall i, state(R_i)  = NTR$ \algorithmiccomment{No token received}
    \State Router $R_i$ receives message $M$
    \If {$state(R_i) = NTR \land type(M) = TOK$ } \algorithmiccomment{First token received}
    \State $state(R_i) = TR$

    \State $xcount = 0$
    \For {$M_j \in buffer(R_i)$}
    \State $mark(M_j)$
    \State $xcount \gets xcount + 1$ \algorithmiccomment{Mark all in-flight messages}
    \EndFor

    \State {\bf Send} $xcount$ to the checkpoint controller,                 
    $pcount \gets 0$

    \For{$node \in neighbors(R_i) \cup tile\_elements(R_i)$}
    \State {\bf Send} $TOK$ to $node$ \algorithmiccomment{Propagate to all neighbors}
    \EndFor
    
    \ElsIf {$state(R_i) = TR$} \algorithmiccomment{Already received}
    \If{$dest(M) = i \land type(M) \ne TOK$} \algorithmiccomment{Leaving the NoC}
    \State $pcount \gets pcount + 1$    \algorithmiccomment{$pcount$ is the msg processed count}
    \EndIf    
    \EndIf

    \State Periodically:
    \State \hspace{\algorithmicindent} Send $pcount$ to the checkpoint controller

    \If {$\forall i, R_i$ has taken a checkpoint $\land  \sum pcount_i  = \sum xcount_i$ } 
    \State \textbf{Flushing complete}
    \EndIf
  \end {algorithmic}
\end{algorithm}

Algorithm \ref{flushalgo} describes the operation of routers in the presence of an ongoing snapshot. $R_i$ describes the $i^{th}$ router. {\em buffer} represents the internal buffers of a router. Each router can be in one of two states: token recieved ($TR$) or no token recieved ($NTR$). When the router is in state $NTR$, and the incoming message is not a token, normal operations continue. Algorithm~\ref{flushalgo} augments the classical Chandy-Lamport algorithm~\cite{nancy} with a few additional actions (see Section~\ref{sec:back}). To start with, note that the checkpoint token gets the {\bf highest priority} in the routers, which preserves
the abstraction of FIFO channels with respect to checkpoint correctness.

The main idea is to keep a count of the all pre-snapshot messages in the system and ensure that all of them reach their destinations before we begin the next phase. We achieve this in a distributed way, having each router count messages internally (expressed in the $xcount$ variable), and later sharing this information with the checkpoint controller, which stores the value. Periodically, all the destination routers send the count representing the total number of messages processed ($pcount$) to the checkpoint controller. When $\sum pcount_i$ is the same as $\sum xcount_i$ and all the routers successfully take their checkpoint, we can conclude that all pre-snapshot messages in the NoC have reached their destination \footnote{There are subtle issues with in-flight messages and coherence directories. We discuss them in Appendix \ref{appendix.proof}} . 

\begin{theorem}
  Algorithm \ref{flushalgo} will correctly flush all pre-snapshot messages from the system. 
\end{theorem}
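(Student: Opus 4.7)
The plan is to prove correctness in two layers: a \emph{classification} lemma identifying every message as either pre- or post-snapshot consistently across the system, and a \emph{counting} lemma showing that the termination predicate $\sum pcount_i = \sum xcount_i$ (together with every router having reached state $TR$) holds only when every pre-snapshot message has been delivered. To set up, I would declare a message $M$ pre-snapshot if its originating tile element produced $M$ before processing the snapshot token, and post-snapshot otherwise. The key physical fact that makes this classification well-behaved at routers is the combination of NoC link FIFO order with the stated \textbf{highest-priority} rule for tokens: once router $R_i$ has received a token along link $(j,i)$, every subsequent message on that link must have been injected by $R_j$ after $R_j$ transitioned to $TR$, hence must be post-snapshot. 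This is exactly the invariant that Chandy--Lamport relies on, and I would reuse that argument verbatim.

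With the classification fixed, I would argue the counting invariant by a simple conservation law. Define $\mathcal{P}$ as the multiset of all pre-snapshot messages ever injected into the NoC, $\mathcal{P}_{\text{buf}}(t)$ as the set of pre-snapshot messages residing in router buffers at time $t$, and $\mathcal{P}_{\text{done}}(t)$ as the set of pre-snapshot messages already delivered to their destination tile. The identity $|\mathcal{P}| = |\mathcal{P}_{\text{buf}}(t)| + |\mathcal{P}_{\text{done}}(t)|$ holds at all times. I would show that (i) when $R_i$ transitions from $NTR$ to $TR$ it contributes $|\mathcal{P}_{\text{buf}}(t) \cap buffer(R_i)|$ to $\sum xcount$, and that every pre-snapshot message is accounted for in exactly one such transition because the marking at line~7 is permanent and tokens have highest priority; and (ii) $pcount_i$ counts precisely those marked (pre-snapshot) deliveries at destination $i$ once $R_i \in TR$, since by (i) no unmarked message can reach destination after $R_i$'s transition and any pre-snapshot message delivered before $R_i$'s transition is impossible (the destination tile reaches $TR$ on its first token, which must arrive no later than any pre-snapshot message on the same incoming link). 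Combining, $\sum pcount_i = \sum xcount_i$ exactly when $|\mathcal{P}_{\text{done}}| = |\mathcal{P}|$, i.e., the NoC contains no pre-snapshot messages.

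The hard part will be handling the ``derived'' traffic flagged in the paper's footnote: pre-snapshot requests that generate further messages at intermediate agents such as the coherence directory, so that a single originating access produces a cascade spanning the snapshot boundary. My plan is to adopt the convention that a reply or forwarded message inherits the snapshot bit of the request that caused it (independent of the state of the agent that issues it), so that the classification lemma and both count invariants are closed under derivation; the priority rule must then be extended so that any derivative of a pre-snapshot message also preempts post-snapshot traffic at routers. Under this convention, the counting argument above goes through unchanged. Termination is then straightforward: tokens propagate over the finite NoC in bounded time, $\sum xcount_i$ stabilizes once every router has reached $TR$, and $\sum pcount_i$ is nondecreasing and bounded above by $\sum xcount_i$, so equality is eventually detected by the checkpoint controller.
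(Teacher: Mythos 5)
Your overall architecture (a Chandy--Lamport classification invariant plus a conservation-law argument that $\sum pcount_i = \sum xcount_i$ certifies delivery of every pre-snapshot message) matches the paper's in spirit, and your added liveness argument at the end is a genuine bonus the paper omits. However, there are two concrete gaps. First, your conservation identity $|\mathcal{P}| = |\mathcal{P}_{\text{buf}}(t)| + |\mathcal{P}_{\text{done}}(t)|$ silently omits messages that are \emph{in transit on a link} at the moment the adjacent routers transition to $TR$: if $R_j$ (still in $NTR$) injects a pre-snapshot message onto link $(j,i)$ while the token is traveling from $R_i$ to $R_j$, that message is in neither router's buffer at either transition instant, so it is never contributed to $\sum xcount$, and the termination predicate can fire while it is still in flight. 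Your step (i), which claims every pre-snapshot message is accounted for ``in exactly one such transition,'' fails precisely here. The paper recognizes this and adds a per-link countdown state machine (the \textbf{missed messages} scheme) that counts, at the receiving router, any pre-snapshot message arriving within one link delay of sending the token; a correct proof must either introduce such a mechanism or prove it unnecessary, and yours does neither.

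Second, your treatment of directory-derived traffic does not close the counting argument as claimed. If a reply inherits the pre-snapshot bit of its triggering request, then a \emph{new} pre-snapshot message is injected into the NoC after every router's $NTR\!\to\!TR$ transition; it was never in any buffer at transition time, so it is uncounted in $\sum xcount$ yet will be counted in $pcount$ on delivery (or, worse, the predicate may be satisfied before it is delivered). So ``the counting argument above goes through unchanged'' is false: you need an explicit accounting rule for message creation, which is exactly what the paper supplies by having the directory's router withhold its $pcount$ update and notify the checkpoint controller to increment $xcount$ by $n-1$ when one pre-snapshot request spawns $n\ge 2$ messages. Your inheritance convention is a defensible alternative classification, but without the corresponding $xcount$ adjustment (and a re-examination of the at-most-once-counting lemma, which your extended priority rule perturbs), the termination predicate is unsound.
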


\begin{proof}
  Refer to the results in  appendix \ref{appendix.proof}. 
\end{proof}
\subsection{Cache Operations }

After we take a snapshot at a cache, normal operations continue. The pipeline or upper levels will keep sending requests, albeit marked post-snapshot, and the caches will continue to service them. However, we must reconsider some cache operations that may depend on the snapshot bit. The updated cache operations are summarized in Algorithm \ref{cacheoperations}.

\begin{algorithm}
  \footnotesize
  \caption{Cache operations}
  \label{cacheoperations}
  
  \begin {algorithmic}[1] 

    \State Message received $M$ for a cache line $l$.
    \If{$type(M) = READ$}
      Read $l$
    \ElsIf{$type(M) = WRITE$}
    \If{$l$ is marked \texttt{pre-snapshot} and $l$ is modified}
          \State {\bf Send} $l$ to the next level with the \texttt{snapshot} bit .
       \EndIf
    \State Write to $l$.

    \ElsIf{$type(M) = GETS$}  \algorithmiccomment{Get a copy of the line in shared state}
    \State Forward $l$ to a requesting cache
    \ElsIf{$type(M) = GETX$} \algorithmiccomment{Get a copy of the line in exclusive state}
    \If{$l$ is marked \texttt{pre-snapshot} and $l$ is modified}
    \State Send $l$ to the next level with \texttt{snapshot} bit
    \EndIf
    \State Forward $l$ to the requesting cache; Evict $l$
    \ElsIf{$type(M) = INVALIDATE/EVICT$}
    \If{$l$ is marked \texttt{pre-snapshot}}
    \State Send $l$ to the next level with the {\texttt snapshot} bit
    \EndIf
    \State Evict $l$
    \EndIf
        
  \end {algorithmic}

\end{algorithm}

\textbf{Read}: Reading from a line does not interfere with our snapshot. Since future writes on the same core will not be a part of the checkpoint, locally servicing a read even from a pre-snapshot marked line is correct.  

\textbf{Write}: Locally servicing a write to a line may overwrite the data that is marked pre-snapshot. As a result, we must evict the line to the lower level before we effect the write. If the lower level receives this write but has a line marked as pre-snapshot, we overwrite that value because values at higher levels are more recent. In this way, the cache hierarchy coalesces writes internally.

\begin{figure}[!htb]
  \centering
  \includegraphics[width=0.8\columnwidth]{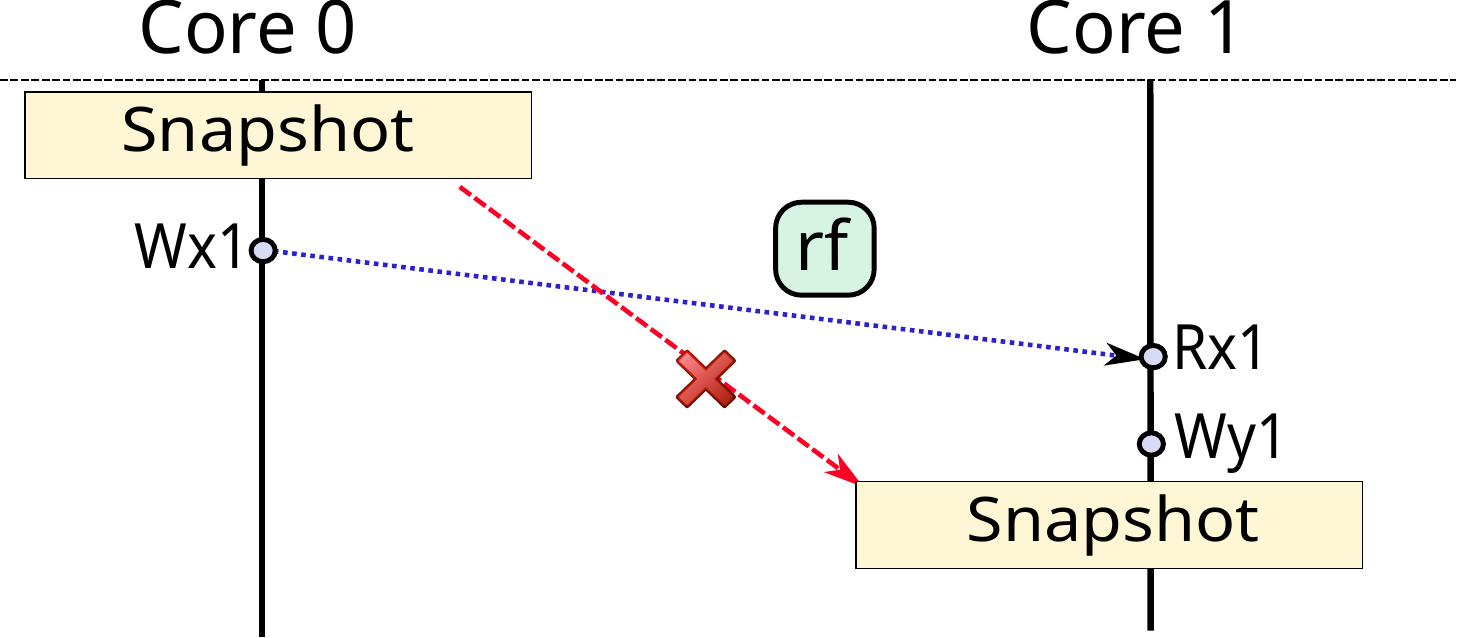}   
  \caption{An $rf$-edge (read-from) cannot cross a snapshot token}
  \label{GETS}
\end{figure}

\textbf{GETS}: A GETS (get a copy of the block in the shared state) request does not lead to a write to the line. However, one might argue that a GETS can cause inconsistency in snapshotted data. Figure~\ref{GETS} illustrates the problem. Consider a write to $x$ ($Wx1$) by core $0$. We then have a read to $x$ by core $1$. Core $1$ reads $x$ as $1$, thereby creating an $rf$ ({\em read-from}) edge. Now, at core $1$, a  write to $y$ executes. We will have an inconsistent snapshot if $Wx1$ is not included in the snapshot while $Wy1$ is included. 

The above situation implies that the writer (\textbf{Wx1}) is post-snapshot, since it is not included in the snapshot while a causally related write (\textbf{Wy1}) is included in the snapshot. However, this situation can never arise. Since the token has the highest priority in our network (discussed above), the reading cache will receive the token from the sister cache before a response from that sister cache. As shown in Figure \ref{GETS}, in our system, a token cannot cross an \textbf{rf}-edge in a hypothetical dependence graph. Consequenlty, \textbf{GETS} messages can be serviced without eviction.

\textbf{GETX}: A GETX (get a copy in the exclusive state) message will write to the line. As a result, we must evict this line to the lower level before forwarding it to the requesting cache.

\textbf{EVICT/INVALIDATE}: We send pre-snapshot information to the next level along with the line. 

\subsection{Cache Scrubbing}

\begin{figure}[!htb]
  \centering
  \captionsetup{justification=centering}
  \includegraphics[width=0.8\columnwidth]{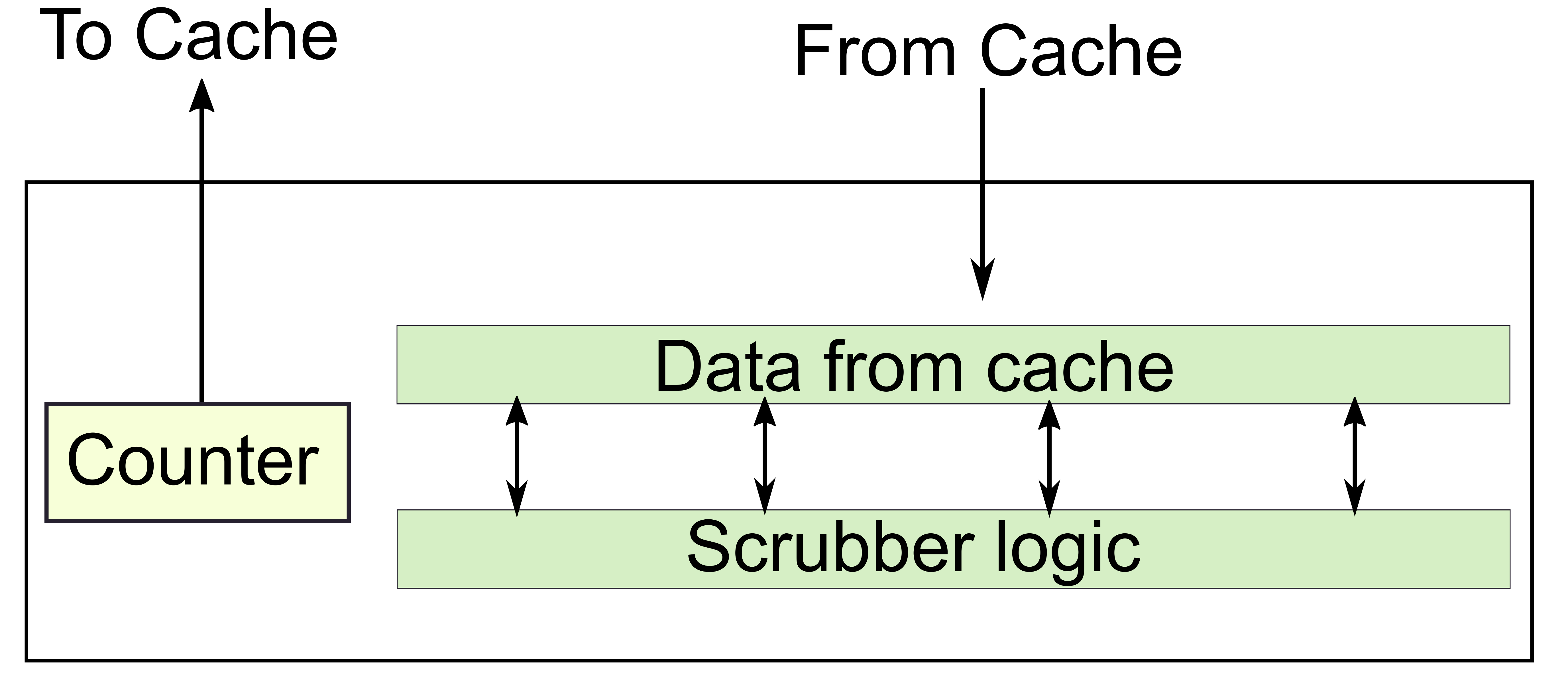}   
  \caption{The implementation of the scrubber}
  \label{SCRUBBERIMAGE}
\end{figure}

Figure \ref{SCRUBBERIMAGE} shows the implementation of our scrubber in a cache. The idea is to read entire rows and send pre-snapshot data to lower levels. Scrubbing can be made more efficient by delaying the time between consecutive scrubs. However, the checkpoint's latency depends on our scrubber's frequency since a long-lived pre-snapshot line can be present in the last row of our cache (one that the scrubber touches the end). We, therefore, present a tunable parameter called \textit{scrubbing-step}. We perform a scrubbing cycle of the cache at each \textit{scrubbing-step}. The lower its value, the more the scrubber activates.

To avoid a scenario where an evicted pre-snapshot line reaches a lower level after the lower level has scrubbed the set for that line, we serialize the scrubbing at different levels. First, we scrub  the L2 cache after flushing finishes (in our setup). When this is over, a broadcast from the checkpoint controller tells all tiled shared caches to scrub. Serializing scrubbing in this way does not affect the latency of our checkpoints since the DRAM turns out to be the bottleneck.

\subsection{NVM Access Scheduling}

Since \textit{write amplification} is defined by the number of wasted writes in an NVM device due to its block-based wear-levelling techniques, we coalesce data in the NVM controller. We propose the implementation of a 64 cache line (16 DRAM pages) access scheduler for our NVM, not unlike the ones used in modern GPUs~\cite{GPUAS} (our design is shown in Figure \ref{accessscheduler}). We use an access scheduler since the goal of this structure is to minimize the number of row switches. GPUs use such a structure to increase the effective bandwidth of the main memory. We use it to reduce the number of row switches and the number of wasted writes. 

\begin{figure}[htbp]
  \centering
  \captionsetup{justification=centering}
  \includegraphics[width=0.99\columnwidth, height=1.8in]{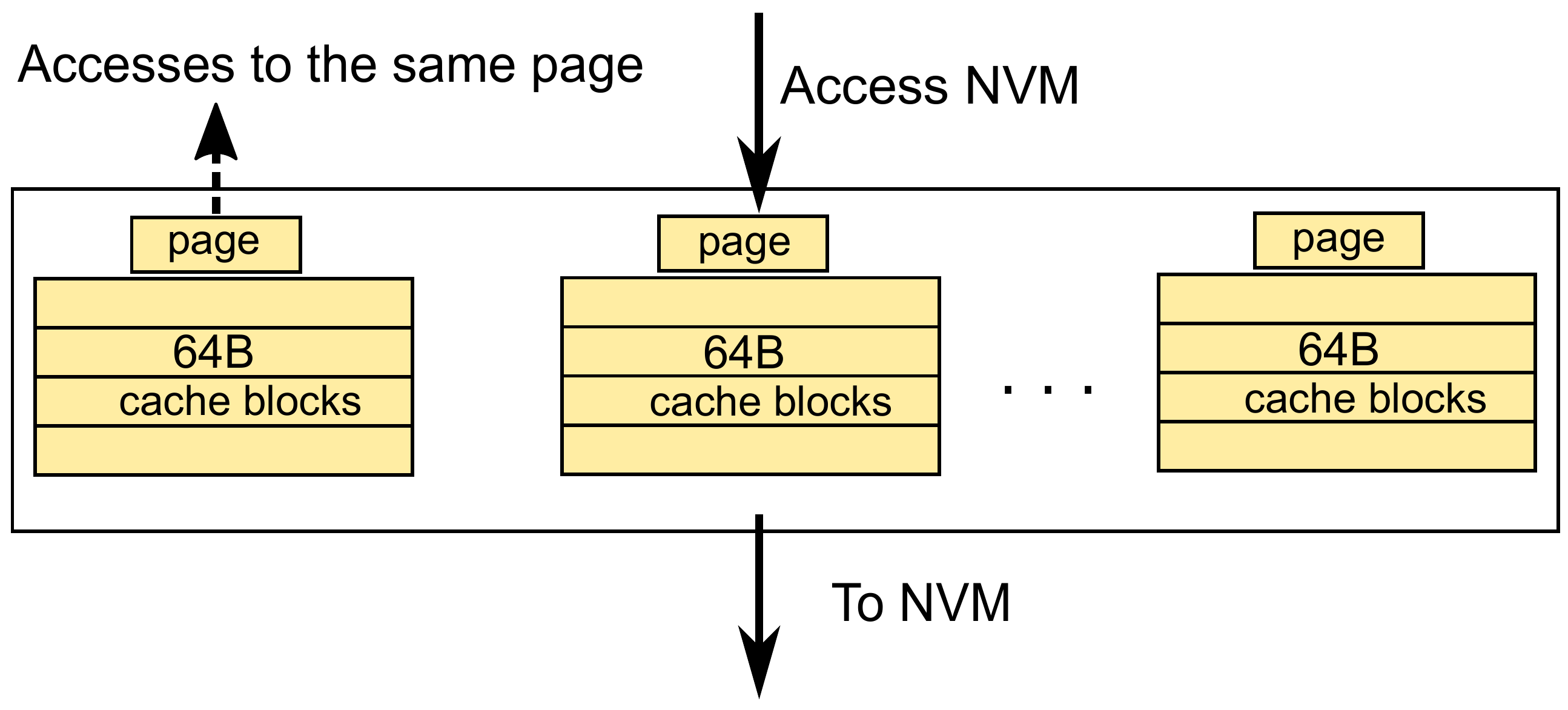}   
  \caption{Implementation of the access scheduler.}
  \label{accessscheduler}
\end{figure}

The access scheduler maintains the state for 16 pages. For each page, we store its 4 constituent cache blocks (contiguous physical addresses). This structure can coalesce all the writes to the same page. Given that our page is a strict subset of a 4-KB VM page, our assumptions or contiguity do not pose a problem.

\subsection{DRAM Scrubbing, Prediction and Coalescing}
In this section, we discuss DRAM scrubbing, which is required to persist pages in the DRAM that are a part of the epoch that needs to be persisted. This need arises because we do not persist all writes to the NVM by default. This is a direct consequence of our decision to have large epoch sizes and bound the $CL$. The research questions emerge.

\begin{itemize}
\item RQ1: How and when to scrub the DRAM?
\item RQ2: How to enforce an upper-bound on the number of unpersisted pages in the DRAM?
\item RQ3: How to merge the DRAM and cache data before writing to the NVM?   
\end{itemize}

\subsubsection{RQ1: DRAM Scrubbing}
If we were to scrub the DRAM similarly to the caches, we would need to walk and read all of the DRAM for every epoch, which is not only wasteful but also unnecessary, and it makes maintaining a bound on the $CL$ difficult. Any epoch will populate a fraction of the DRAM that is a part of the modified working set of that epoch. Due to the incremental nature of our snapshot, the pages in the modified working set need to be persisted. 

We must, therefore, track the working set of each epoch. We use a per-epoch DRAM page table that only tracks the modified pages; {\bf note that in the DRAM no page address re-mapping/translation is done}. This page table is similar to page tables used by the OS (albeit conceptually). However, unlike OS page tables, this page table is for the entire system that covers all used physical addresses.

\begin{figure}[htbp] 
  \centering
  \captionsetup{justification=centering}
  \includegraphics[width=0.99\columnwidth]{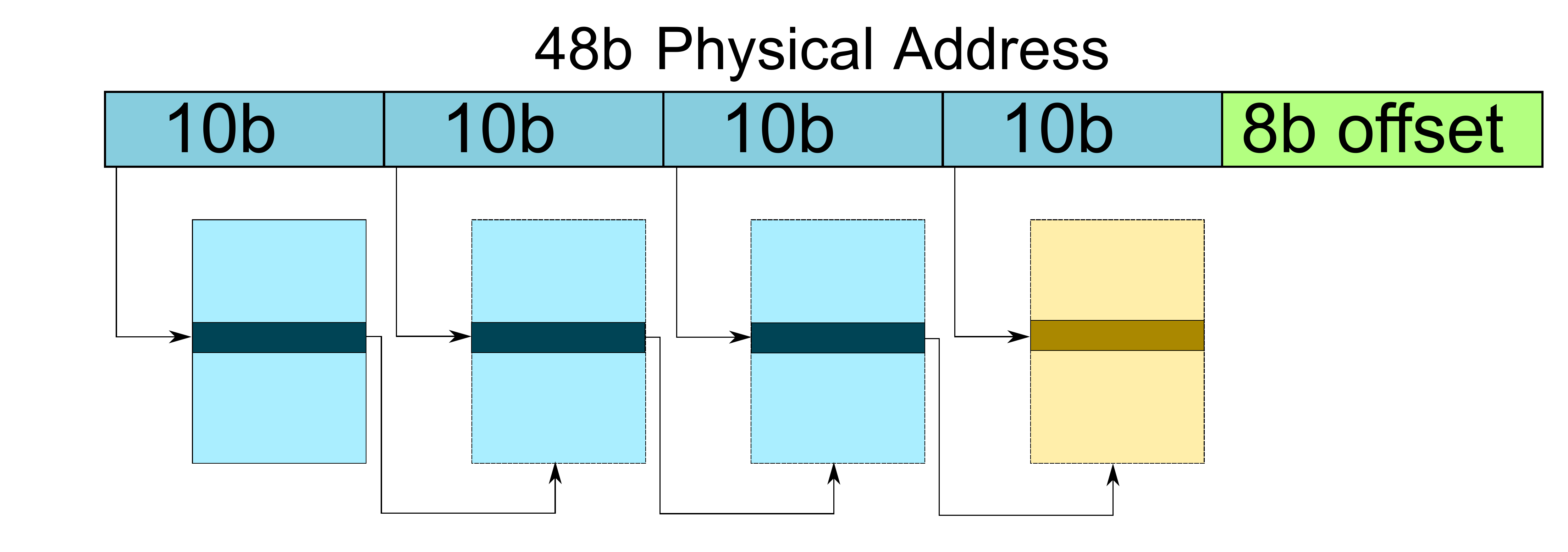}   
  \caption{Per-epoch DRAM page table.}
  \label{drampagetable}
\end{figure}

The design of the page table is shown in Figure \ref{drampagetable}. This table is indexed whenever we issue a write to the DRAM. Since writes are not on the critical path, this does not hurt the performance of the applications. We also iterate over the table when walking the DRAM. The least significant 8 bits are used as an offset. We implement our page table as a four-level tree, indexed using 10 bits at each level. At a non-leaf level (marked as blue in Figure \ref{drampagetable}), we store a physical address for the next page and a valid bit. The final level (marked as yellow in Figure \ref{drampagetable}) holds a {\em 2-bit private counter} for every DRAM page (to be discussed next) as well as a {\em valid} bit. It is the valid bits at the last level that signify whether a DRAM page belongs to an epoch's modified working set or not.

We use an incremental scheme to walk this page table. The idea is to balance the scrubbing traffic and regular post-epoch DRAM traffic.
Since we know the page size at each level, it is more efficient to have a per-level hierarchical counter on the memory controller (to maintain the progress of the page-walk process: one counter for each level of the page table). Alongside this counter, we cache each level's 48b starting page address. To walk the DRAM, we continually increment the lowest level counter to find a page to persist. If the resulting page has its valid bit set in the page table, it is read and persisted. This process continues until the last level counter reaches $2^{10}$ (the number of entries in each page at a level). When this happens, we calculate the entry's index corresponding to the last level of the page table by incrementing the counter stored in the previous level until we find a valid page. The page found, if not already cached (in a 3KB cache at the memory controller), is read from the DRAM, and the address stored in its 48b field is copied to the last-level counter as the base address for future walks. The process continues up the hierarchy until all valid pages have been walked. In this way, the hierarchical counter implements a nested \textbf{for} loop in hardware.  

\subsubsection{RQ2: Limiting the Number of Pages in DRAM}

It is necessary to bound the maximum number of pages to be walked in the DRAM to limit the latency of the checkpoints. Failing to do this results in an increase in the minimum time it takes to snapshot. Hence, we must periodically persist pages from the current epoch in the hope that a future snapshot will arrive sooner than a modification to those speculatively-persisted pages. This task calls for using a {\em locality predictor} alongside our DRAM scrubber.

We have not seen any prior work \cite{TSOPER,NVOverlay,WSP,THYNVM,PiCL} that treats a running epoch as a future to-be-persisted epoch. All previous works have assumed that the line between a running epoch and a persisting epoch is the epoch boundary. By persisting parts of the running epoch, we blur the epoch boundaries. However, the recovery epoch must remain consistent. Therefore, these speculated persists are not merged with the recovery page table until we reach the epoch boundary.

The upper bound that our locality predictor must work with is defined by the latency and is calculated by the checkpoint controller. Let $l$ be the latency value (in seconds) and $f$ be the frequency of the system. The number of cycles in which the snpashot must complete is $c = l  f$. If it takes $k$ cycles to persist a page, then the upper bound for an epoch is simply

\begin{equation} \label{eq.1}
n\le\frac{c}{k}
\end{equation}

However, this does not take into account the non-determinism on a real system and makes simplistic assumptions. Let us thus use this as a base and design a more sophisticated algorithm.

\subsubsection{Dynamic Tuning of the Locality Predictor}
It is important for the locality predictor to know when to presist a page. From Equation \ref{eq.1}, we know the value of $n$ that roughly corresponds to the target $CL$. If we were to keep no pages in DRAM that need to be persisted (similar to NVoverlay), we will achieve a checkpoint latency equivalent to the scrubbing time of caches. Let us call this the minimum latency $CL_{min}$. The acceptable observed latency ($CL_{obs}$) must therefore lie in the interval $(CL_{min},CL)$. 

The checkpoint controller, working with the memory controller tunes the locality predictor for optimal write-amplification. The idea is to achieve a $CL_{obs}$ close to $CL$. This means that the locality predictor should not be over-aggressive. Consequently, the system achieves an optimal write amplification for the given checkpoint latency.

To perform the tuning, an epoch is divided into 50 sub-epochs. At the start of each sub-epoch, the memory controller will calculate how close the epoch's modified set is to the theoretical maximum number of modified pages ($n$). Tracking any epoch's modified set requires a counter at the memory controller, which is incremented after a new page is added into the page table. Let the difference divided by $n$ be $\delta$. In this paper, we tune the aggressiveness of the predictor by changing its activation rate $\mathcal{R}$ (how frequently it runs). It can vary from 512-256K cycles. It varies linearly with $\delta$.

\subsubsection{Locality Prediction}
To make the prediction, we allocate a private 2-bit counter for each DRAM page and a shared 3-bit saturating counter for a group of 64 contiguous pages. The private counters are stored alongside the last level page table entry for an epoch, whereas the shared counters are stored separately. It can never be the case that a page is a part of both epochs' modified working set because we do not do any further address re-mapping/translation: we thus need to persist the earlier avatar of the page first. Therefore, we can safely store the private counter alongside the DRAM page-table entry. On the other hand, for shared counters, it is expected that a small subset of these counters will be used (due to locality) during a given time. As a result, separately storing these is a better idea. If these counters were present in the page table, a shared counter jump would mean a jump that is 64 pages away.

Every time we walk the DRAM page table, looking for a page to persist, we do a cyclic clear of bits in private and shared counters. This cyclic clear will clear bit 0 of the private counter in one operation and bit 1 in the next operation. The main idea is to capture both temporal and spatial locality. The shared counter captures spatial locality since nearby addresses share this counter. The cyclic clear captures temporal locality since it will eventually clear all bits of shared and private counters to zero for pages that have not been accessed in a long time. We can {\em persist} pages speculatively if we see that they are a part of the current epoch with their shared and private counters set to zero. 

%========================
%========================
%This comment is there to reduce the page size to <11

%When writing to a page, we set both the private and shared counters associated with the page to the maximum value possible. We do this instead of incrementing the counters by one to avoid the case where we increment a counter to 1 just before we clear its bit to 0. In this case, we mark the page as not having temporal locality, even when this is not the case.
%========================
%========================
\subsubsection{RQ3: Mispredictions and Data Coalescing}

To reduce write amplification, we {\em coalesce} data from the upper levels with data in the DRAM. To achieve this, the first time a request from the cache for a block reaches the access scheduler, a request is sent to the DRAM scrubber asking it to scrub the page that belongs to the cache line. The scrubber obliges if this page is present in the pre-snapshot epoch table. If the page is absent, the scrubber sends a \textit{nak} (negative acknowledgement) to the access scheduler.
  
The locality predictor may predict to persist a page before a request from the access scheduler arrives. This is the case of misprediction since a line from the cache has reached the access scheduler, whereas the predictor did not expect this. As a result, the misprediction penalty is that we send a \textit{nak} to the access scheduler, which must then persist the page again, creating wasted writes. Correctness is never violated since the
recovery table in the NVM is never updated speculatively. We don't describe the maintenance of this table and checkpoint recovery in detail because it is the same as that used by other competing work~\cite{NVOverlay}.

\subsection{Tunable Parameters}
Our system tunes the following values. The system determines these values using the $\langle ES, CL \rangle$ pair provided to the system. However, these parameters are not directly accessible to the user or the OS. This tuple is sent along with the token by the checkpoint controller.

\begin {enumerate}
\item \textit{scrubbing-step} : Frequency of the cache scrubber. 
\item \textit{scrubbing-granularity}: Number of rows (or sets) to scrub in one scrubbing-cycle.
\item \textit{memory-walk step}: Frequency of the DRAM walker.
\end{enumerate}

\section{Experimental Results}
\label{sec:results}
In this section, we implement and evaluate \nvper on Tejas \cite{tejas}, a cycle-approximate Intel PIN-based \cite{PIN} architectural simulator. We ran representative 7 benchmarks from the PARSEC \cite{PARSEC} benchmark suite. 

\subsection{Setup}

\begin{table}
  \centering
  \footnotesize
  \begin{tabular}{ |c|c| } 
    %\hline
    %\multicolumn{2}{|c|}{Simulator configuration}\\
    \hline
    Parameter & Value\\
    \toprule
    \hline
    Number of cores & 8\\
    Pipeline type & Out-of-order \\
    Frequency & 3200MHz \\
    Private Caches & L1 L2 \\
    Coherence & L2\\
    L1 Cache &  32K 4-cycle WT 8-way 64B block \\
    L2 Cache & 256K 8-cycle WB 8-way 64B block \\
        
    L3 Cache & 8-way Tiled SNUCA LLC \\
    L3 Cache Bank & 2M 30-cycle WB 8-way 64B block\\ 
    
    NoC Topology & TORUS\\
    DRAM ranks & 2 ranks, 8 banks per rank\\
    NVM ranks & 2 ranks, 8 banks per rank\\
    DRAM page size & 2Kb\\
    NVM page size  & 2Kb\\
    \hline
    
  \end{tabular}
  \caption{System setup. Our system setup is in line with recent work in the area of persistent memory architectures \cite{eightcores1,caches1}}
  \label{setup}
\end{table}

Table \ref{setup} show the system setup used for both the evaluation of NVoverlay and \nvper. Our L1 caches are write-through, and the L2 cache is a write-back coherent cache. This arrangement decreases the number of evictions from the coherent domain without L2 having to access L1 for each remotely-received request.
The shared LLC is a the L3 level, which is distributed all over the chip. We use the popular SNUCA~\cite{advarch} scheme. The system uses a torus NoC with eight cores and eight LLC tiles. The NoC also has nodes for the directory and memory controllers.  

We implement NVoverlay on Tejas. Parameters shown in Table \ref{setup} are used for NVoverlay as well with a few exceptions. In NVoverlay, the L1 cache is write-back and the L2 is shared by two cores. Unless stated otherwise, we equate NVoverlay's epoch time and \nvper $ES$ and set our $CL$ to 1. We  also implemented a version of \nvper called \nvper-\texttt{aggr}. In this version, data from LLC is directly persisted onto the NVM.

Since both NVoverlay and \nvper do not affect the critical path (loads do not get stalled), we see only a minor difference ($\approx 0.2\%$) in the performance between NVoverlay and \nvper. This minor difference is in favor of \nvper. Since NVoverlay has a write-back L1 cache (inherent requirement), any coherent request at the L2 cache must first check L1 caches of both cores for a newer version. Not doing so violates per-location sequential consistency \cite{advarch}.  

\subsection {Write Amplification}

\begin{figure}[htbp]
     \centering     
     \includegraphics[width = \columnwidth]{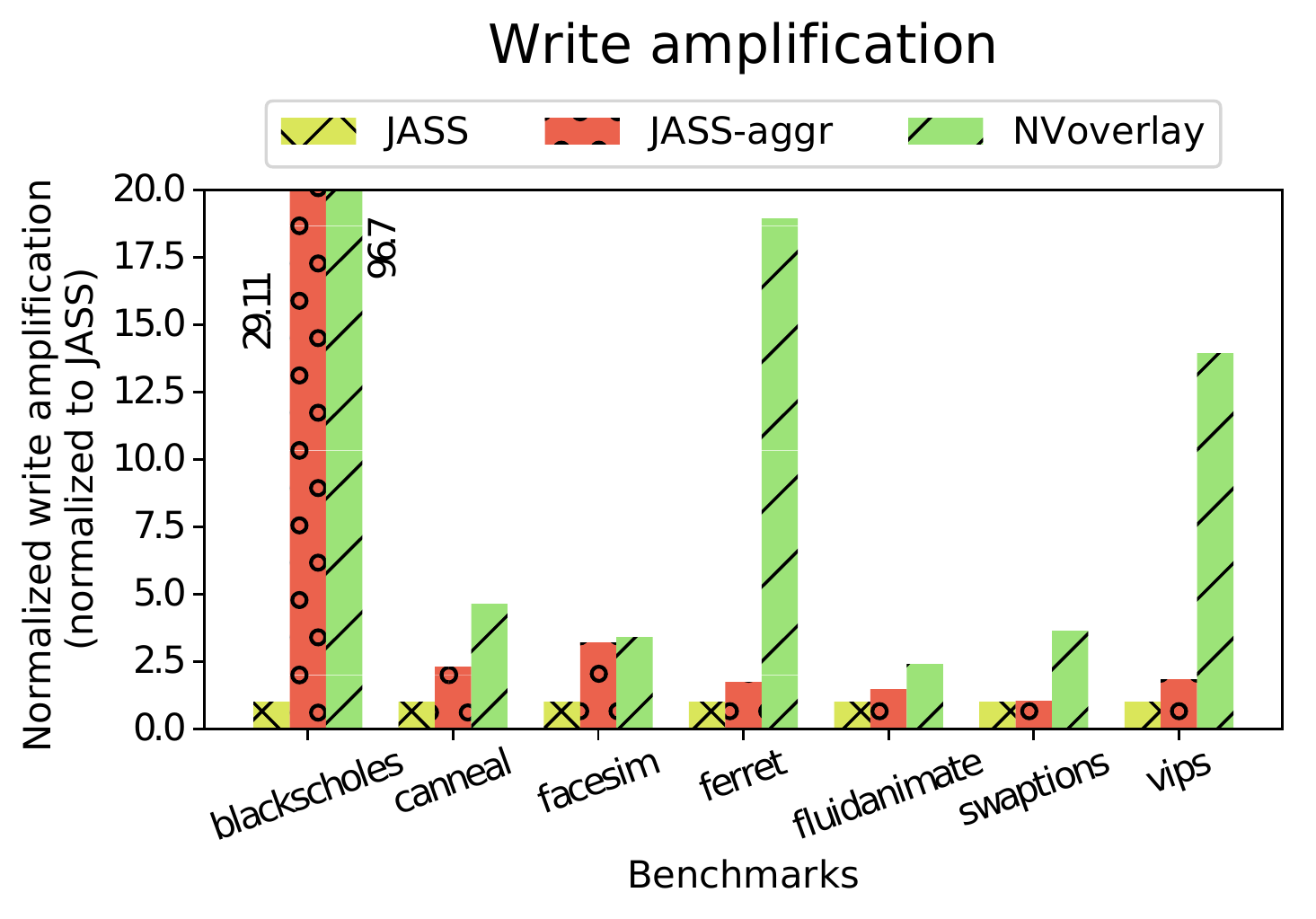}
     \caption{Normalized write amplifcation of \nvper and \nvper-\texttt{aggr} compared with NVoverlay.}
     \label{result.wafull}
\end{figure}

In this section, we compare the write amplification of \nvper with NVoverlay. Figure \ref{result.wafull} shows an improvement of at least 3$\times$ on most benchmarks. Moreover, \nvper-\texttt{aggr} performs better than NVoverlay in all benchmarks due to longer epochs as well as coalescing in the LLC. For some benchmarks like {\em blksch} the $WA$ improvement is as much as 96 $\times$.

The 3$\times$ improvement of \nvper over NVoverlay is largely attributed to the fact that data is not persisted on to the NVM upon cache eviction. Moreover, a 2$\times$ epoch size causes more coalescing to take place in the entire hierarchy (compared to only L1 and L2 for NVoverlay). However, we witness some outliers. For benchmarks with heavy communication  (\textit{ferret}), or small working sets (\textit{blksch,vips}), NVoverlay will create smaller epochs or perform less coalescing at the caches. \textit{Blksch} is the extreme outlier, since it creates $96\times$ more writes on NVoverlay compared to \nvper.

\subsection{Insights into the Operation of the System}

\begin{table*}
  \centering
  %\footnotesize
  \begin{tabular}{ |c|c|c|c|c|c|c|c|c| } 
    %\hline
    %\multicolumn{2}{|c|}{Simulator configuration}\\
    \hline
    {\bf Workload} & {\bf DRAM (ms)} & {\bf \% Cache} & {\bf \% DRAM}  & {\bf \% Coalescing} & {\bf LLC MPKI} & {\bf \% DRAM $WA$} \\    
    \hline
    \toprule
    \hline
    Blk.  & 5.25  & 88 & 12 & 13  & 0.20 & 50.16 \\
    Cann.  & 5.0  & 40 & 60 & 23  & 0.48 & 49.1  \\
    Face.  & 5.0  & 84 & 16 & 0.4 & 0.91  & 8.7   \\
    Ferr.  & 5.05  & 78 & 22 & 9.5 & 0.33 & 50.7  \\
    Fluid. & 5.0  & 47 & 53 & 1.6 & 0.51  & 10.7  \\
    Swap.  & 5.23  & 86 & 14 & 5.5 & 0.09 & 25.3   \\
    Vips   & 5.01  & 84 & 16 & 2.6 & 0.17 & 49.9  \\ 
    \hline
    
  \end{tabular}

  \caption{Breakdown of different aspects of the inspected system under the benchmarks. Time values for the longest epoch are reported for a latency constraint of 5ms. }
  \label{result.latency_breakup}
\end{table*}

Table \ref{result.latency_breakup} shows salient features of the execution: DRAM scrubbing time, \% of pre-snapshot data in the caches, DRAM, degree of coalescing, misprediction rate of the locality predictor, and the LLC MPKI (miss per kilo instructions). In this case the $CL=5$ ms. 
It is evident from column 2 in Table \ref{result.latency_breakup} that we are achieve our $CL$ target (within $\pm$ 5\%). Since DRAM and caches begin scrubbing independently, the slowest of the two will dictate the total time. Inevitably, DRAM scrubbing is the limiting factor. The caches are more or less fully full with pre-snapshot data. {\em Canneal} and {\em Fluidanimate} are notable exceptions given their access patterns. The DRAM (\%) numbers are relative: the numerator is the pre-snapshot footprint in the DRAM and the denominator is the total size of the snapshot. The degree of coalescing is small (0.4 to 23\%); most values are less than 10\%. 

The LLC MPKI gives us an insight to the locality of the application. Benchmarks with small ({\em blksch}) and medium ({\em swap,vips}) working sets have a smaller LLC MPKI due to increased locality. This distinction will help us understand the write amplification trends (in the next section).

\% DRAM $WA$ represents the additional writes that we need to perform owing to locality prediction
with reference to a scenario where we have a perfect predictor. We see an additional
8.7-50.1\% writes. This leads to write amplification.

We also tune our parameters dynamically. In our setup, we set the inital values of \textit{scrubbing-step} and \textit{memory-walk step} to 1K cycles each and {\em scrubbing-granularity} is set to 1 (one cache row at a time). We allow the {\em memory-walk step} to be tuned within the range $(0.5K,256K)$ cycles. These values represent a vast tuning range for our experiments. Since cache scrubbing is faster than DRAM scrubbing, we have more opportunities to steal a cache's cycle or even just a read port to read entire rows in a very fast way. This causes the caches to rarely tune their {\em scrubbing-step} values.

\subsection {Sensitivity to Input Parameters}

\begin{figure}[htbp]
     \centering     
     \includegraphics[width = \columnwidth]{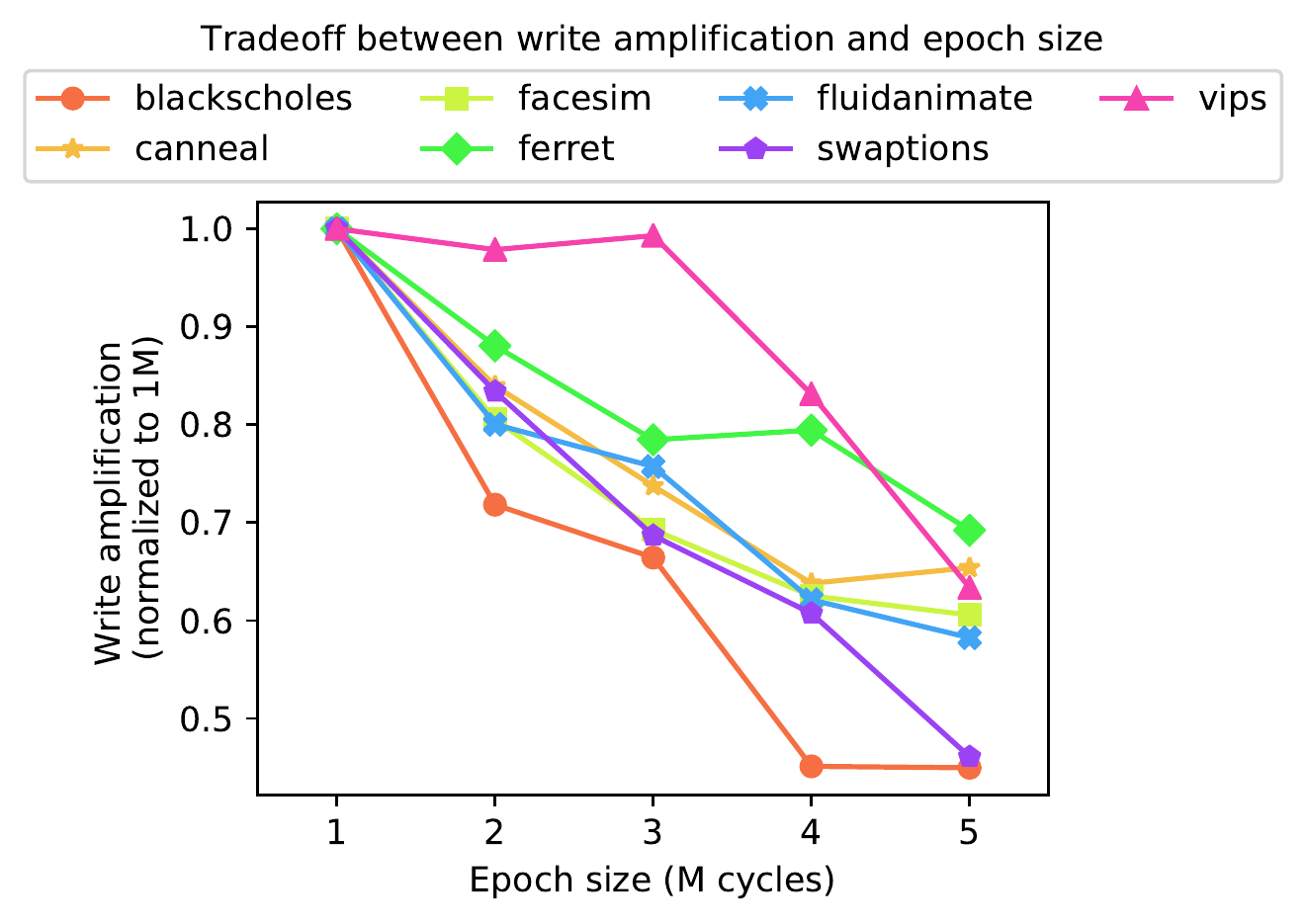}
     \caption{Normalized write amplifcation of \nvper with increasing epoch size ( compared with NVoverlay).}
     \label{result.epoch_wa}
\end{figure}

In this section we run two experiments to compare \nvper to NVoverlay. In the first experiment, we run \nvper with five different epoch sizes. In the second experiment, we run \nvper on five different latency constraints. We discuss the results below.

\subsubsection{Epoch Size ($ES$)}

We evaluate \nvper with five different epoch sizes and study the effects of write amplification with increasing $ES$.  $CL$ is set to 1 ms. The results in Figure \ref{result.epoch_wa} show that $WA$ is reduced by a factor of 0.8$\times$ when we increase $ES$ from 1M to 2M. The decrease in $WA$ is attributed to the coalescing performed by the hierarchy. Diminishing returns set in after a point (as expected), which is different for every benchmark.

Benchmarks with small working sets (\textit{blksch}) are expected to see the curve flatten after a point. On the other hand, benchmarks like (\textit{vips}) have access patterns where the increasing epoch size is beneficial after a certain point.

\subsubsection{Checkpoint Latency ($CL$)}

\begin{figure}[htbp]
     \centering     
     \includegraphics[width = 0.85\columnwidth]{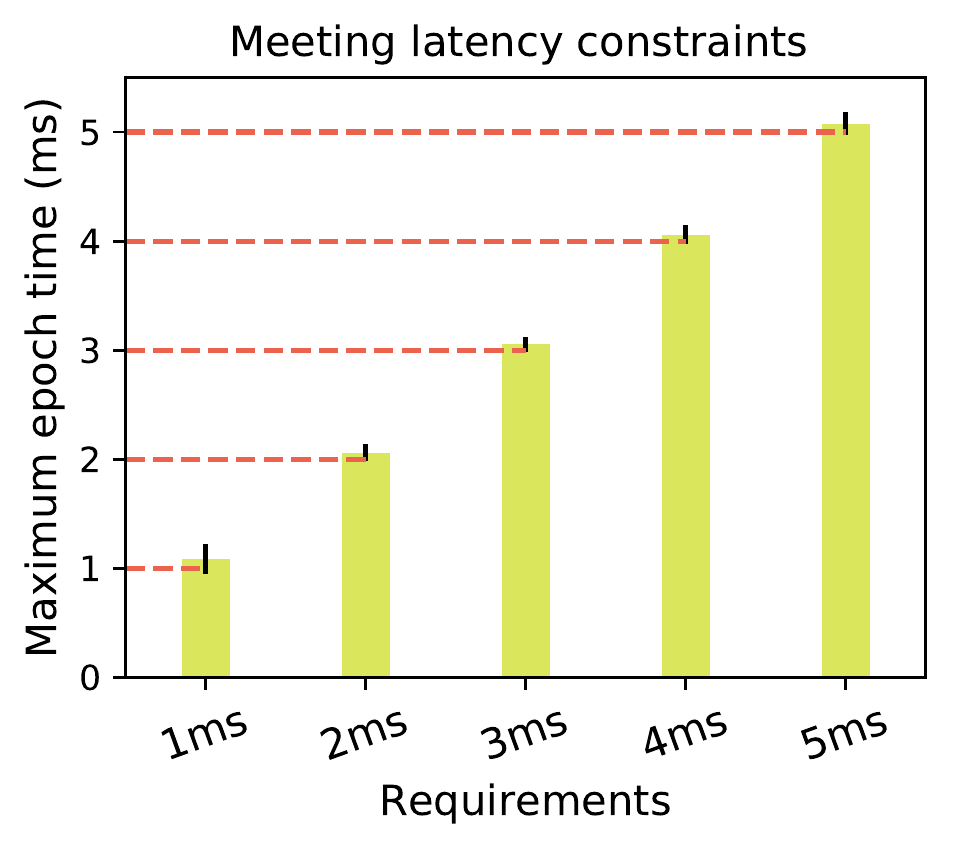}     
     \caption{Maximum checkpoint completion time for \nvper under different checkpoint latency constraints.}
     \label{result.latency}
\end{figure}

\begin{figure}[htbp]
     \centering     
     \includegraphics[width = \columnwidth]{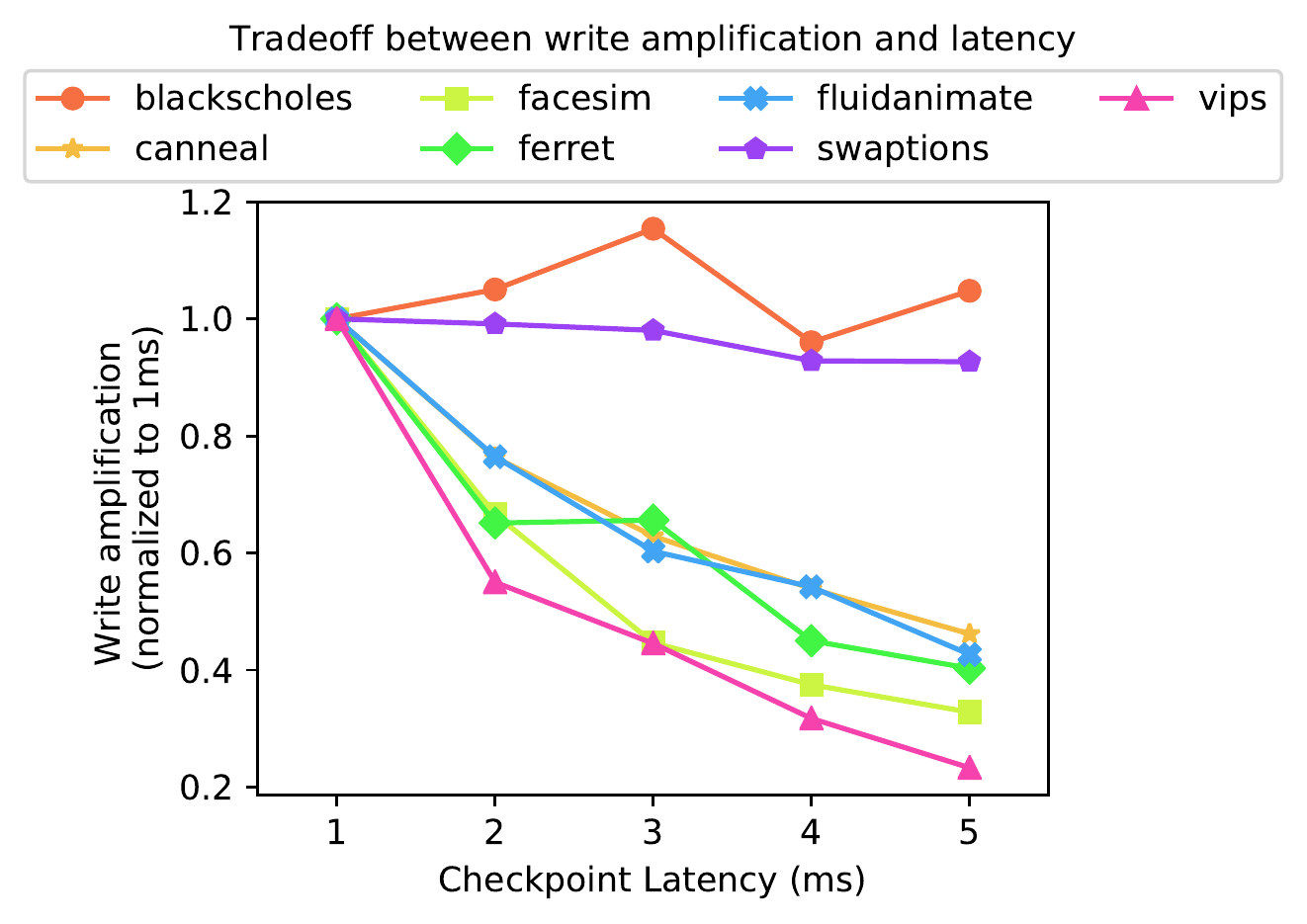}
     \caption{Tradeoff between latency and write amplification.}
     \label{result.clear_wa_cl}
\end{figure}

We compare \nvper with five different $CL$ values: 1, 2, 3, 4, and 5 ms. We show that we always meet the $CL$ constraint (refer to Figure \ref{result.latency}).  We see that given a latency, \nvper meets it. $CL_{obs}$ (observed $CL$) is always within $5\%$ of  the target $CL$.

The corresponding write amplification is shown in Figure \ref{result.clear_wa_cl}. This figure illustrates the tradeoff between write amplification and the checkpoint latency. The graph has a negative slope, i.e., with increasing latency, write amplification decreases. \nvper is unique in this space. Our system, given a $CL$, will minimize the write amplification as well as meet the checkpoint latency.

From Table \ref{setup}, we can see that benchmarks with relatively small working sets (\textit{blksch,swaptions}) scale less effectively with increasing latency than the others in Figure \ref{result.clear_wa_cl}. Since the value of $CL$ controls the locality predictor (Equation \ref{eq.1}), increasing $CL$ beyond the maximum limit of the system does not see any benefits. The maximum limit is a function of the working set. On the other hand, systems with a larger memory footprint  see a steeper decrease in write amplification with an increasing $CL$ target up till a certain point (as expected).

\section{Related Work}
\label{sec:rel}

In this section, we look at previous system level checkpointing proposals: ThyNVM~\cite{THYNVM} and PiCL~\cite{PiCL} first. These works directly motivated our closed competing work, NVoverlay \cite{NVOverlay}.

\subsection{ThyNVM and PiCL}
ThyNVM~\cite{THYNVM} divides each program's execution into time-ordered epochs. Each epoch has two phases: an execution phase and a checkpointing phase. To avoid application stalls, the checkpointing phase of an epoch runs concurrently with the execution phase of the next epoch. ThyNVM incurs a large write amplification since it maintains three versions of data at any given time (current,last,penultimate). All three copies are required since a failure can corrupt both the working copy and a checkpoint in progress.

PiCL~\cite{PiCL} aims to improve on ThyNVM by reducing the epoch sizes further. It uses undo logging in the caches. Although not a problem at the level of blocks (since logs are persisted in batches), the system will produce a log for every write operation on the cache. Such a scheme is prohibitive since the baseline write amplification is high because of the writes to the log.

\subsection{NVoverlay}
NVoverlay \cite{NVOverlay} aims to improve upon both PiCL \cite{PiCL} and ThyNVM \cite{THYNVM} by further reducing the size of each epoch. NVoverlay defines epoch boundaries by monitoring coherence traffic, thus creating prohibitively small epochs. The system uses a distributed vector clock with no global synchronization. This causes some cores to run ahead of others (in terms of the epoch number). 

Designed for high-frequency checkpointing, NVoverlay tags each cache line with a 16-bit epoch id. Moreover, the system tags each DRAM page using ECC space, which is not practical. Multiple versions of a cache line can exist in the hierarchy. The resulting cache pollution is handled by the scrubber.  

The {\em recovery epoch} is the latest epoch that was persisted on all cores. A multi-version snapshotting mechanism manages different versions of page tables  for each epoch in DRAM, while a single recovery epoch's page table is present in the NVM. To reduce writes to NVM, NVoverlay unmaps data in the NVM at cache line granularity while mapping is done at page granularity. This causes the system to occasionally perform garbage collection of sparsely mapped pages.  All of these have high overheads and as we have argued, such high-frequency checkpointing at the expense of a high $WA$ is counter-productive.

\section{Concluding Remarks}
\label{sec:conc}

In this paper, we proposed a very different paradigm for architecting NVM-based checkpointing systems.
Our key insight was that NVM-based systems should typically not be designed for instant recovery. Most applications
do not have such requirements, and the penalty paid in terms of write amplification is very high. Instead,
we follow a more nuanced approach. We allow the user to set a checkpointing frequency and checkpoint latency
as per her requirements. We guarantee that the checkpoint latency is met and furthermore the write amplification
is minimized. To the best of our knowledge, no other competing work provides such a facility where the 
behavior of the system can be tuned as per the application user's requirements. Our design minimizes the $WA$
at any operating point on a best-effort basis. We show that as compared to NVOverlay \nvper reduces the $WA$
by 2.3-96\%.

\newpage

\bibliographystyle{plain}
\bibliography{refs.bib}

\clearpage
\appendix

\section{Proof of Termination Detection on NoC}
\label{appendix.proof}
%\begin{theorem}
%  Algorithm \ref{flushalgo} will correctly flush all pre-snapshot messages in the system. 
%\end{theorem}

%\begin{proof}

\textbf{Terminology} : The set of connected routers to a router is represented by $neighbors(R_i)$. Each router has an internal bit that states whether it knows of an ongoing checkpoint or not. Let us denote this bit by $state(R_i)$.

Algorithm \ref{flushalgo} describes the operation of routers in the presence of an ongoing snapshot. $R_i$ describes the $i^{th}$ router. $buffer$ represents the internal buffers of a router. Each router can be in one of two states: token received ($TR$), or no token received ($NTR$). Normal operations continue when the router is in state $NTR$, and an incoming message is not a token.

\begin{lemma}
  \label{LEMMA1}
  A post-snapshot message never reaches a router in state $NTR$.  
\end{lemma}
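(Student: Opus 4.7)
The plan is to prove the contrapositive: every post-snapshot message $M$ that arrives at a router already finds that router in state $TR$. I would proceed by induction on the number $k$ of hops $M$ has taken since entering the NoC. Fix a post-snapshot message $M$ generated by tile element $E$ attached to router $R_0$, and let $R_0, R_1, \ldots, R_i$ be the sequence of routers $M$ visits.

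For the base case $k = 0$, note that $E$ can produce a post-snapshot message only after $E$ itself was notified to take its local snapshot, and by the algorithm this notification is issued by $R_0$ the moment $R_0$ first processes the token and enters state $TR$. Hence $R_0$ is already in $TR$ when $M$ is injected. For the inductive step, suppose $M$ has just arrived at $R_k$ with $R_k \in TR$, and let $L$ be the outgoing link on which $R_k$ forwards $M$ to $R_{k+1}$. Two facts then force $R_{k+1} \in TR$ when $M$ arrives. First, when $R_k$ first entered state $TR$ it propagated the token on every outgoing link (lines of Algorithm \ref{flushalgo} for the $NTR \land TOK$ branch), including $L$; because the token has the highest priority at every router (stated just above Algorithm \ref{flushalgo}), the token was placed on $L$ strictly before $M$. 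Second, $L$ is FIFO, so the token is received at $R_{k+1}$ strictly before $M$. Therefore by the time $M$ arrives at $R_{k+1}$, router $R_{k+1}$ has already processed a first token (this one or an earlier one from another neighbor) and is in $TR$. Iterating up to $k = i$ yields the lemma.

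The main obstacle is making the priority argument airtight: one must show that ``highest priority'' really implies the token is emitted on each outgoing link before any later-arriving or already-buffered non-token traffic that $R_k$ forwards on the same link, since otherwise the FIFO conclusion does not apply. A secondary subtlety, which is benign but worth noting, is that $R_k$ may have entered $TR$ via a token arriving on a link different from $M$'s incoming link; the argument only uses that $R_k \in TR$ when it forwards $M$ on $L$, so it is insensitive to which token first put $R_k$ into $TR$. With these points pinned down, the induction goes through cleanly, and the full detail belongs in Appendix \ref{appendix.proof}.
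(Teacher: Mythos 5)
Your proposal is correct and rests on exactly the same ingredients as the paper's own (much terser) proof: the token is propagated on every link with highest priority, links are FIFO, hence the token precedes any post-snapshot message on every link and arrives first. Your induction on hop count is simply a more careful unpacking of the paper's one-line contradiction argument, not a different route.
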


\begin{proof}
  
  Let us assume that at router $R_i$, we have received a post-snapshot marked message while $state(R_i)=NTR$. Since the token has the highest priority in our network and is propagated on all links, this is not possible. The token would have arrived first.    
\end{proof}

\begin{lemma}
  \label{LEMMA2}
  All pre-snapshot marked messages, which are not being currently transmitted on a link, are counted by at most a single router.  
\end{lemma}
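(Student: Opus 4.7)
The plan is to identify the unique moment at which a given pre-snapshot marked message gets counted, and then argue that no other router subsequently recounts it. Inspecting Algorithm~\ref{flushalgo}, the only place where $xcount$ is ever incremented is inside the for-loop executed on a router's $NTR \to TR$ transition. Hence if a pre-snapshot message $M$ carries the mark, it was marked by exactly one router $R_i$, namely the router in whose buffer $M$ resided at the instant $R_i$ first received a token, and at that same instant $R_i$ also incremented its $xcount$ for $M$. The remaining task is to show that $M$ cannot later sit in some other router $R_j$'s buffer at the instant $R_j$ first transitions to $TR$.

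Fixing this $R_i$, I would track $M$ along the (possibly multi-hop) path it traverses after leaving $R_i$. Upon its first-token handler, $R_i$ dispatches a snapshot token on every outgoing link, in particular to $M$'s next hop $R_{i+1}$. Since tokens have the highest priority at routers and links are FIFO, this token is placed on the $R_i \to R_{i+1}$ link strictly before any regular message $R_i$ later forwards, $M$ included; hence the token arrives at $R_{i+1}$ strictly before $M$. Consequently, when $M$ eventually arrives at $R_{i+1}$, that router has already processed a token (from $R_i$, or possibly an earlier one from a different neighbor) and is in state $TR$. Therefore $M$ is not in $R_{i+1}$'s buffer at the instant $R_{i+1}$ first transitions to $TR$, so no increment of $R_{i+1}$'s $xcount$ attributable to $M$ ever occurs.

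The same argument extends inductively along the path: every downstream router that $M$ visits is already in state $TR$ upon $M$'s arrival, either because its immediate predecessor relayed a token ahead of $M$ (by the FIFO-plus-priority argument above), or because it had received a token from a different neighbor earlier. Combined with the uniqueness of $R_i$, this yields the desired at-most-one-counting property for every marked pre-snapshot message currently sitting in some buffer.

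The main obstacle I anticipate is formalizing the FIFO-plus-priority invariant cleanly. Intuitively tokens leapfrog regular traffic on a shared FIFO link, but to make this airtight I would treat the first-token handler at a router (the block that marks buffered messages and fans tokens out to all neighbors) as atomic with respect to each outbound port's scheduler, so that no regular message can slip past the freshly-injected token on any outgoing link. Given this explicit scheduling assumption, per-link FIFO ordering together with the token-priority rule delivers precisely the ordering invoked in the inductive step, and the lemma follows.
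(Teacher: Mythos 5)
Your proof is correct and rests on exactly the same mechanism as the paper's: token priority plus per-link FIFO ordering guarantees the token reaches every downstream router before the message does, so no router after the first counter can be in state $NTR$ when the message arrives. The paper phrases this as a two-router contradiction while you unroll it inductively along the message's path (and usefully flag the atomicity assumption on the first-token handler), but the argument is the same.
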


\begin{proof}
  Assume a message is counted by two routers: $R_i$ and $R_j$. Assume $R_i$ counted it first. This means that after it counted the message, it must have sent the token in $R_j$'s direction. The message could only have been sent after the token. Given that the token has the highest priority, the message will never be able to overtake it. This means that when the message reaches $R_j$, $R_j$ is already in a post-snapshot state. This further means that $R_j$ cannot {\em count} the message. This leads to a contradiction. 
\end{proof}

We may optimize the marking of our message even further. Since the token is guaranteed to travel first across all links, it can never be the case that an incoming pre-snapshot message on a token-received link was not marked. However, there can be a case of messages transmitted on a link that may be missed. We fix this next.  

\textbf{Missed messages}: Suppose at router $R_i$, one of its neighbors $R_j$ in state $NTR$ sends a message to $R_i$. Let's say the token is in transit from $R_i$ to $R_j$ at the same time. As a result, the sender $R_j$ does not count the message. We must therefore count it at the receiver $R_i$. Consequently, we propose that each router starts a small state machine on every link after changing its state from $NTR$ to $TR$. This state machine counts down to the delay of the associated link. The router must count any messages that may have possibly been in transit as long as they appear within a window of sending the token to a neighbor (when the counter is non-zero), and it is a pre-snapshot message. Once the state machine has counted down to zero, $R_i$ can correctly assume that the receiver will count all pre-snapshot messages on its end. There are no messages in-flight on any link that are uncounted.
  
\begin{lemma}
  \label{LEMMA3}
  Employing the \textbf{missed messages} scheme (as discussed above) will count all in-flight messages at the destination.
\end {lemma}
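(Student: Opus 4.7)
The plan is to fix an arbitrary pre-snapshot message $M$ that, at the moment the snapshot is initiated, is in transit on some link $R_j \to R_i$, and to argue that $R_i$ (the destination) will count $M$ exactly once. Two structural facts set the stage: by Lemma~\ref{LEMMA1}, $R_j$ must have been in state $NTR$ when it placed $M$ on the link (otherwise $M$ would be post-snapshot); and since tokens have strictly highest priority and the link is FIFO, $M$ cannot overtake a token traveling on the same directed link. Let $d$ be the link delay, let $t_i$ and $t_j$ be the times at which $R_i$ and $R_j$ respectively transition from $NTR$ to $TR$, and let $t_M \le t_j$ be the time $R_j$ injected $M$.

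Next I would split on the state of $R_i$ at the moment $M$ actually arrives. If $R_i$ is still in $NTR$, then $M$ lands in $buffer(R_i)$ and, when $R_i$ subsequently transitions, it is picked up by the initial buffer sweep that populates $xcount$ in Algorithm~\ref{flushalgo}. If $R_i$ is already in $TR$ on arrival, this is precisely the \emph{missed messages} case that the lemma targets, and the whole content of the argument is that $R_i$'s per-link countdown state machine on the $R_j \to R_i$ direction is still running when $M$ shows up.

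The main obstacle is justifying this last claim, which is a timing argument. On entering $TR$ at $t_i$, $R_i$ immediately forwards a token on the $R_i \to R_j$ link; by token priority and FIFO, this token reaches $R_j$ by time $t_i + d$ at the very latest, so $t_j \le t_i + d$ (possibly earlier, if $R_j$ had already transitioned via another neighbor). Because $R_j$ was $NTR$ when it emitted $M$, we get $t_M \le t_j \le t_i + d$, whence $M$ arrives at $R_i$ no later than $t_M + d \le t_i + 2d$. This is exactly the window that the countdown associated with the $R_j \to R_i$ direction is designed to cover after $R_i$ fires its token, so $R_i$ observes and tallies $M$ before the counter expires. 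A subtle point I would have to check carefully here is the tightness of the bound in the edge case where $R_j$ transitions strictly before $t_i + d$ via a different neighbor: the same chain of inequalities still holds, but one has to argue that no later pre-snapshot injection from $R_j$ is possible because $R_j$'s first action on receiving any token is to mark its buffer, so nothing pre-snapshot can slip out between $t_j$ and the propagation of $R_j$'s own outgoing tokens.

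Finally I would combine the two sub-cases and invoke Lemma~\ref{LEMMA2} to rule out double counting: a message in transit at the snapshot moment is by definition not in any router's buffer at its owner's transition, so the source's initial sweep did not include it, and the two destination-side mechanisms (initial sweep if $R_i$ was $NTR$ on arrival, countdown if $R_i$ was $TR$) are mutually exclusive by construction. Thus every in-flight pre-snapshot message is counted exactly once at its destination, which is the content of Lemma~\ref{LEMMA3}.
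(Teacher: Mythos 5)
Your overall strategy matches the paper's: both arguments rest on FIFO links plus the token's highest priority, and both establish the result by showing every in-flight pre-snapshot message is counted at least once and at most once. The paper argues by contradiction in two terse cases (an uncounted message, a double-counted message); you instead fix an arbitrary message $M$, case-split on the receiver's state at the moment $M$ arrives, and run an explicit timing computation. Your version is more constructive, and for the at-most-once half it correctly isolates the mutual exclusivity of the buffer sweep and the countdown together with Lemma~\ref{LEMMA2}. Two framing quibbles: you restrict attention to messages in transit ``at the moment the snapshot is initiated,'' whereas the messages the scheme must catch are those in transit when the token crosses the link in the opposite direction, which can be long after global initiation (your inequality $t_M \le t_j$ already handles the general case, so this is cosmetic); and the claim that $R_j$ is in state $NTR$ when it emits $M$ does not follow from Lemma~\ref{LEMMA1}, which constrains post-snapshot messages arriving at $NTR$ routers rather than senders --- it follows simply from $M$ having left $R_j$'s buffer uncounted.

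The substantive issue is the step you yourself flag as ``the main obstacle.'' Your chain $t_M \le t_j \le t_i + d$ gives an arrival time of at most $t_i + 2d$, i.e., up to two link delays after $R_i$ enters $TR$ and fires its token. But the scheme as described starts a countdown equal to \emph{one} link delay $d$ at that moment, so the window closes at $t_i + d$; a message that $R_j$ injects just before the token reaches it at $t_i + d$ arrives at $R_i$ at $t_i + 2d$ and falls outside the stated window. You assert that the countdown ``is designed to cover'' the $2d$ case, but that is not what the text defines, so as written this step is unjustified. Either the countdown must be the round-trip delay $2d$ (which your own bound shows is both necessary and sufficient), or you need an additional argument for why $d$ suffices. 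To your credit, your explicit arithmetic surfaces exactly the point that the paper's own proof glosses over with a qualitative ``since links are FIFO, we have a contradiction''; but a complete proof must either fix the window length or derive the tighter bound, not assume it.
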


\begin{proof}
  Two scenarios can break this hypothesis. The first is an uncounted message, and the second is the double counting of a message. We show that both these cases are impossible.
  
  An uncounted message must be transmitted after the token is sent on a link; this is to avoid counting on the receiving side. On the other hand, for a sender to not count a message, it must be transmitted before the token. Since links are FIFO, we have a contradiction.
  
   On the other hand, for a message to be counted twice, it needs to start its journey on a link $L$ after the sender has received the token but arrive before the token has been received at the destination router. This case can never happen since individual links are FIFO in an NoC.
\end{proof}

A subtle problem with termination detection can occur due to a coherence directory. It may be the case that a pre-snapshot message to a directory generates new messages in the NoC. Since a directory is not a processing element for some messages, but rather a concierge where messages go to know their final destination, there may be a case that the directory generates more than one message for every pre-snapshot marked message it receives.
  
\textbf{Updating xcount}: The system must be made aware of new pre-snapshot messages in the NoC. To do so, the router at the directory is specialized. It does not update its $pcount$ until it receives a signal from the directory. The directory will signal its router if the pre-snapshot message has been eliminated from the network and there are no other messages generated. If the directory generates exactly one message, the router is notified about this, and neither the $xcount$ or $pcount$ changes (one message in, one message out). If the directory generates $n \ge 2 $ messages, 
$xcount$ needs to be incremented by $n-1$ (the initial message is consumed). This is done by notifying the checkpoint controller.

\begin{theorem}
  Using this scheme for \textbf{updating xcount} and \textbf{pcount} (as discussed above), no pre-snapshot message remains in the NoC after a flush termination.
\end{theorem}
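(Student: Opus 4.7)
The plan is to argue that at the moment the termination condition $\sum pcount_i = \sum xcount_i$ is satisfied (and every router has taken its local snapshot), every pre-snapshot message must have been consumed by some destination. I would set up the proof as a conservation argument: let $X$ be the total number of pre-snapshot messages ever existing in the NoC (including any new messages produced by the directory), and let $P$ be the total number of such messages delivered to their destinations. The goal is to show that when $\sum xcount_i = X$ and $\sum pcount_i = P$, the equality of these sums forces $X = P$, which means no pre-snapshot message is still in transit.

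First I would show that $\sum xcount_i$ correctly equals $X$. The contributions to $xcount$ come from three sources: (i) messages sitting in the buffer of $R_i$ when $R_i$ first receives a token, (ii) messages the ``missed messages'' state machine catches on incoming links during the link-delay window after $R_i$ sends a token, and (iii) extra injections of $n-1$ from the directory-specialized router whenever a pre-snapshot directory message spawns $n \ge 2$ new pre-snapshot messages. By Lemma~\ref{LEMMA2}, no pre-snapshot message is counted by two different routers; by Lemma~\ref{LEMMA3}, no in-flight pre-snapshot message is missed at the link level. For the directory case, each expansion is accounted for exactly once because the directory-router delays incrementing its $pcount$ until the directory explicitly signals how many outgoing messages it generated; the $n-1$ correction applied by the checkpoint controller exactly reconciles the one consumed input with the $n$ new pre-snapshot messages now on the NoC. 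Hence every pre-snapshot message, original or directory-generated, contributes exactly one unit to $\sum xcount_i$.

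Next I would show $\sum pcount_i$ equals the number of pre-snapshot messages that have reached their destinations. By Lemma~\ref{LEMMA1} a post-snapshot message never arrives at an NTR router, so by the time a router increments $pcount$ it is in state $TR$ and can distinguish pre- from post-snapshot messages via the snapshot bit; together with the directory's deferred-increment rule, every delivered pre-snapshot message increments $pcount$ exactly once and no post-snapshot message does.

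Putting these together: $\sum xcount_i$ counts each pre-snapshot message exactly once at its point of creation (or first observation after the snapshot front), and $\sum pcount_i$ counts each pre-snapshot message exactly once at its point of consumption. If the two sums agree and all local snapshots have been taken (so no new pre-snapshot messages can be injected by cores or caches), then every created pre-snapshot message has been consumed, leaving the NoC free of pre-snapshot traffic. The main obstacle I anticipate is the directory case, because it is the one place where the ``conservation'' is not one-in-one-out and where a race between the directory receiving a pre-snapshot message and the controller observing the transient equality $\sum pcount_i = \sum xcount_i$ could falsely signal termination; I would need to argue carefully that the deferred $pcount$ update at the directory router, combined with the $n-1$ correction to $xcount$, closes this window so the equality can only hold after the directory has finished expanding every pre-snapshot request it received.
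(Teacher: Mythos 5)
Your proposal is correct and takes essentially the same approach as the paper: both arguments reduce to showing that $\sum xcount_i$ counts every pre-snapshot message (original or directory-spawned) exactly once via Lemmas~\ref{LEMMA2} and \ref{LEMMA3} and the \textbf{updating xcount} scheme, and that $\sum pcount_i$ counts exactly the delivered ones, so equality forces complete delivery. The paper phrases this as a proof by contradiction while you phrase it as a direct conservation argument, and your explicit attention to the directory race window (closed by the deferred $pcount$ increment) is a point the paper's proof leaves implicit, but the substance is the same.
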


\begin{proof}
  
      Flush termination is detected when $pcount = xcount$, and all the routers have reported their values. Let us assume, at this point, there is a message somewhere in the NoC that is still marked pre-snapshot. $pcount$ will always be the number of messages that have successfully reached their destination. Since $pcount = xcount$ is true; it is the case that $xcount$ was miscounted. Using Lemmas \ref{LEMMA2} and \ref{LEMMA3}, we know that $xcount$ has counted all pre-snapshot messages at the routers correctly. Using the \textbf{updating xcount} scheme, we know that newly created pre-snapshot messages have been counted by the directory's router and updated at the checkpoint controller. Hence $xcount$, has been counted correctly, and our initial assumption was wrong. Thus, proved by contradiction.
  \end{proof}

\end{document}